\documentclass{informs3}
\RequirePackage{tgtermes}
\RequirePackage{newtxtext}

\RequirePackage{newtxmath}

\OneAndAHalfSpacedXI %

\usepackage{endnotes}
\let\footnote=\endnote
\let\enotesize=\normalsize
\def\notesname{Endnotes}%
\def\makeenmark{$^{\theenmark}$}
\def\enoteformat{\rightskip0pt\leftskip0pt\parindent=1.75em
	\leavevmode\llap{\theenmark.\enskip}}

\usepackage{enumerate} %
\usepackage{bbm} %
\usepackage{bm} %
\usepackage{stmaryrd} %
\usepackage{xcolor} %
\usepackage{nicefrac, xfrac} %
\usepackage[normalem]{ulem} %
\usepackage{hyperref}
\hypersetup{colorlinks=true,linkcolor=blue,citecolor=blue,urlcolor=blue}
\usepackage{url}
\usepackage{comment}

\newcommand{\R}{\mathbb{R}} 
 
\newcommand{\Z}{\mathbb{Z}}
\newcommand{\N}{ \llbracket1,n\rrbracket }

\newcommand{\la}{\lambda}
\newcommand{\mt}{\mu^*}

\DeclareMathAlphabet{\mymathbb}{U}{BOONDOX-ds}{m}{n}

\newcommand{\qandq}{\quad \mbox{and} \quad}

\newcommand{\qforq}{\quad\mbox{for }}

\newcommand{\qforallq}{\quad\mbox{for all }}

\DeclareMathAlphabet{\mymathbb}{U}{BOONDOX-ds}{m}{n}
\newcommand{\deq}{\stackrel{\rm d}{=}}

\usepackage{graphicx}
\usepackage{multirow}
\usepackage{hhline}

\usepackage[small, margin=1cm]{caption}
\usepackage{subcaption} %

\usepackage{appendix}
\usepackage{color}
\usepackage{cancel}
\definecolor{strcolor}{rgb}{0.6, 0.2, 0.6}
\definecolor{commentcolor}{rgb}{0.3125, 0.5, 0.3125}
\definecolor{keycol}{rgb}{0, 0, 1}

\usepackage{bbm}

\usepackage{listings}
\usepackage{hyperref}
\usepackage{url}

\newcommand {\bea}{\begin{eqnarray}}
	\newcommand {\eea}{\end{eqnarray}}

\def\blot{\quad \mbox{$\vcenter{ \vbox{ \hrule height.4pt
				\hbox{\vrule width.4pt height.9ex \kern.9ex \vrule width.4pt}
				\hrule height.4pt}}$}}

\usepackage{natbib}
\bibpunct[, ]{(}{)}{,}{a}{}{,}%
\def\bibfont{\small}%

\TheoremsNumberedThrough     %
\ECRepeatTheorems

\EquationsNumberedThrough    %

\gdef\AQ#1{}
\gdef\CQ#1{}

\hypersetup{
  pdftitle={Stability of Fork-Join Systems with Redundancy and Heterogeneous Servers},
  pdfauthor={Chutong Gao; Seyed Iravani; Ohad Perry},
  pdfsubject={Reject and Resubmit, Operations Research. Resubmitted on July 26, 2026},
  pdfkeywords={fork-join systems; redundancy; stochastic network stability; capacity allocation}
}

\begin{document}

\RUNAUTHOR{Gao, Iravani, and Perry}

\RUNTITLE{Fork-Join Systems with Redundancy}

\TITLE{Stability of Fork-Join Systems with Redundancy and Heterogeneous Servers}

\ARTICLEAUTHORS{%
\AUTHOR{Chutong Gao}
\AFF{Department of Industrial Engineering and Management Sciences, Northwestern University, Evanston, IL 60208, \EMAIL{chutong@u.northwestern.edu}} %
\AUTHOR{Seyed Iravani}
\AFF{Department of Industrial Engineering and Management Sciences, Northwestern University, Evanston, IL 60208, \EMAIL{s-iravani@northwestern.edu}} %
\AUTHOR{Ohad Perry}
\AFF{Department of Operations Research and Engineering Management, Southern Methodist University, Dallas, TX 75205, \EMAIL{operry@smu.edu}} %
} %

\ABSTRACT{
We consider the stability problem of fork-join systems with redundancy (FJR) and heterogeneous servers under both static and dynamic capacity-allocation policies. In an $(n,k)$ FJR system, each arriving job is split into $n$ independent tasks, with one task assigned to each of $n$ parallel servers. Once $k \le n$ tasks have been processed, they are joined and the corresponding job departs the system; the remaining $n-k$ unprocessed tasks are then removed and are therefore termed \emph{redundant}.

We first identify the nominal traffic intensity and characterize the maximal stability region, defined as the set of traffic intensities for which there exists an admissible policy that stabilizes the system. We then establish conditions under which this maximal stability region is attained for two classes of policies: static and dynamic. Specifically, we show that for static allocation policies, in which service capacities remain fixed over time, maximality is achieved whenever the fastest server is allocated no more than $1/k$ of the total service capacity. For dynamic allocation policies, in which a fixed total service capacity may be repeatedly reallocated among the servers, we show that maximality is achieved whenever the cumulative capacity allocated to the $j$ shortest queues does not exceed $j/k$ of the total capacity for every $j=1,\ldots,k-1$.

Our analysis is based on a projection of the $(n,k)$ FJR system onto a simpler $(k,k)$ system that has no redundancy, together with a novel sample-path comparison argument for multidimensional processes based on the generalized Schur-convex order.
}

\KEYWORDS{\textit{fork-join systems with redundancy; stability of stochastic networks; static and dynamic service-allocation policies; generalized Schur-convex order}}

\renewcommand{\HISTORYname}{\textit{Current status}:\enskip}
\HISTORY{\textit{Reject and Resubmit, Operations Research. Resubmitted on July 26, 2026}}

\maketitle

\section{Introduction}\label{section:introduction}

We consider fork-join systems with redundancy (FJR) and $n \ge 2$ heterogeneous servers. In these systems, each arriving job is split by a \emph{fork} operation into $n$ tasks, with each task routed to the queue of a different station. Tasks are processed according to the first-come-first-served (FCFS) discipline at their respective stations and, upon service completion, move to a join-queue associated with the station at which they were processed.
A job departs the system once any $k$ of its tasks, where $1<k\le n$, have completed service. Specifically, when a task completes service and enters its join-queue, a \emph{join} operation is triggered if $k-1$ tasks from the same job are already waiting in other join-queues. In that case, the $k$ completed tasks are joined and the job departs the system. Otherwise, the newly completed task remains in its join-queue until the $k$th task of the same job completes service. Upon the departure of a job, the remaining $n-k$ tasks are deemed \emph{redundant} and are immediately removed from the system without being processed to completion.

We denote by $(n,k)$ an FJR system with $n$ stations---each consisting of a queue, a server, and a join-queue---in which a job departs once any $k\le n$ of its tasks have completed service. A $(3,2)$ system is depicted in Figure \ref{fig:fork_join_redundancy}.

\begin{figure}[htbp]
    \centering
    \includegraphics[width = 0.8\textwidth]{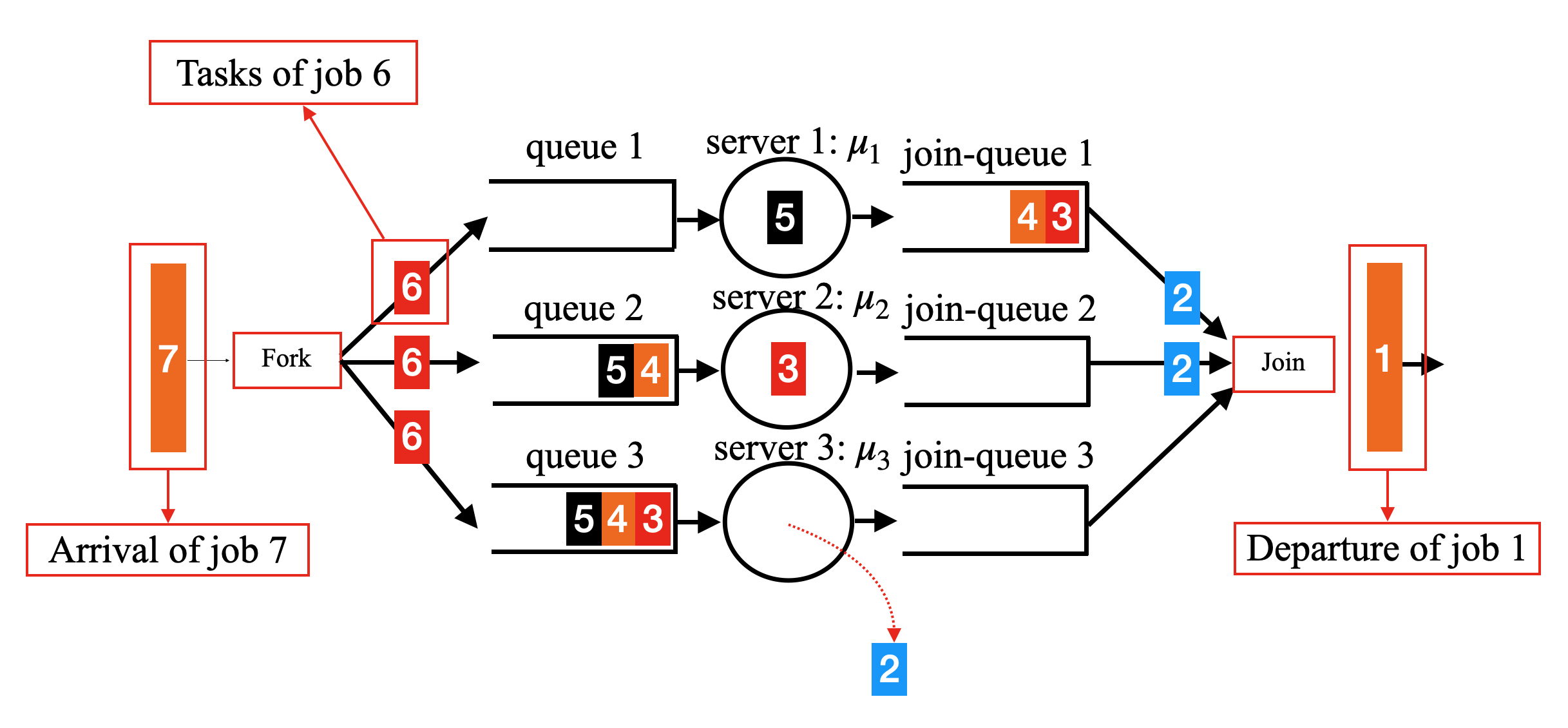}
    \caption{Illustration of a $(3,2)$ system. The second completed task of job 2 has just entered its join-buffer, triggering a join operation that collects the two completed tasks of job 2 from join-queues 1 and 2, after which the job departs the system. Simultaneously, the remaining redundant task of job 2 is removed from station 3.} \label{fig:fork_join_redundancy}
\end{figure}

FJR systems arise naturally in a wide range of applications, including distributed storage and cloud computing (\cite{rizk2016stochastic,joshi2017efficient, wang2019delay, harchol2021open}), healthcare operations (\cite{gardner2016queueing, gardner2017redundancy, ozkan2019control, ozkan2022control, carmeli2023state}), manufacturing systems (\cite{schol2022large,meijer2024optimization}), and parallelized machine-learning tasks (\cite{lee2017speeding,li2020federated, hu2021distributed, ouyang2022training, anthropic_agents_2024, dai2024deepseekmoe}). Depending on the application, a task may represent either a distinct component of a job or a full replica of the job itself.

For example, in cloud storage systems, a file (job) is encoded into $n$ blocks (tasks), and retrieving any $k\le n$ blocks suffices to reconstruct the file (\cite{joshi2012coding, joshi2014delay, joshi2015queues, joshi2017efficient, lee2017mds}). %
In contrast, in redundant parallel large-language-model (LLM) inference, an inference request consisting of a single prompt (job) is replicated across $n$ independent model executions, and the request is completed once responses from any $k$ of the $n$ executions have been obtained. At this point, the remaining unfinished executions are considered redundant and are terminated.
 (\cite{anthropic_agents_2024, agarwal2025first, kim2026atropos}).
The queueing framework studied here captures both interpretations.

\paragraph{\textbf{Heterogeneity in fork-join systems.}}
Server heterogeneity is intrinsic to many applications of fork-join systems. For example, in cloud storage and computing systems, service capacity is often expanded incrementally, either by adding new stations or by increasing the capacity of existing ones (\cite{gardner2019smart}). Likewise, when tasks correspond to different components of a job, their service requirements may vary across stations. Such scenarios are not captured by most of the existing literature, which typically assumes that tasks have independent and identically distributed (i.i.d.) sizes and are processed at identical service rates (cf.~\cite{joshi2017efficient}).

\paragraph{\textbf{Dynamic capacity allocation.}}
In some applications of fork-join systems, a fixed total service capacity can be reallocated dynamically between the different servers. Examples include cross-trained workers who can assist multiple stations in emergency departments  (\cite{ozkan2019control, ozkan2022control}), shared power budgets that can be redistributed among computers (\cite{pedarsani2014robust, pedarsani2017robust}), and
\textit{flexible and collaborative} servers (\cite{bassamboo2012little}, \cite[Chapter 2.1]{dai2020processing}) that can jointly process a single task with additive service capacity.

In this paper, we study FJR systems with heterogeneous servers under both fixed and dynamically allocated service capacities. The standard setting of homogeneous servers processing i.i.d.\ tasks arises as a special case of our results.

\subsection{Stability of FJR Systems} \label{subsection:maximalStabilityRegion}

Despite their practical importance and the extensive academic attention they have received, FJR systems remain poorly understood because of their inherent complexity. Indeed, in addition to the high dimensionality of the state space, the redundancy mechanism introduces a significant analytical complication by rendering the observable state of the system incomplete: some tasks will end up being redundant, yet their identities cannot be determined from the current system state.

As a result, even the most fundamental questions become highly nontrivial. In particular, the characterization of the nominal traffic intensity and the stability region as functions of the arrival and service rates remains unknown. These questions are open even for Markovian systems with homogeneous servers and become considerably more difficult in the presence of heterogeneous service capacities or dynamically controlled servers, as considered in this paper.

We address the preceding questions by first identifying the parameter $\rho := k\lambda/\mt$ as the nominal traffic intensity, where $\lambda$ denotes the job arrival rate and $\mt$ is the total service capacity, i.e., the sum of the service rates across all stations. We then characterize the \textit{maximal stability region}, namely, the set of traffic intensities for which there exists a capacity-allocation policy that stabilizes the system, and show that this region is the interval $[0,1)$. In particular, there exists a capacity-allocation policy that stabilizes the system if and only if $\rho < 1$.

Having characterized the maximal stability region, the next natural questions concern how an FJR system should be designed and, when dynamic capacity allocation is available, controlled so as to attain this maximal stability region. We address both the design and the control problems separately. 

\paragraph{\textbf{Maximal design.}} Under static capacity-allocation policies, the total service capacity is distributed among the $n$ servers and remains fixed over time. Consequently, a static policy corresponds to the design of an $(n,k)$ system with a prescribed total service capacity $\mt$. In this setting, the most fundamental question is that of \textit{maximal design}, namely, how the total service capacity should be allocated among the $n$ servers in order for the system to achieve the maximal stability region.

\paragraph{\textbf{Maximal control.}}
The dynamic-allocation setting presents control problems, the most fundamental of which is characterizing \textit{maximal controls}, namely, dynamic capacity-allocation policies under which the maximal stability region is achieved.  

\subsection{Contributions}\label{subsection:contribution}
In the context of the fundamental research problems discussed above, our contribution is fourfold.
\begin{enumerate}
     \item \textit{\textbf{Traffic intensity and maximal stability region.}} We identify the traffic intensity as $\rho := k\lambda / \mt$, and prove that the maximal stability region is $[0,1)$. In particular, we show that the system is unstable under every admissible policy whenever $\rho \ge 1$, whereas there exist admissible policies that stabilize the system whenever $\rho < 1$.
    
	\item \textit{\textbf{Static allocation.}} We prove that any static capacity allocation for which the service rate of the fastest server is no larger than $\mt/k$ is maximal. Since the service rate of the fastest server necessarily lies in the interval $[\mt/n,\mt/k]$, the presence of redundancy enlarges this interval and thereby increases the robustness of static allocations in achieving the maximal stability region. In particular, in a $(k,k)$ system, which has no redundancy, the unique maximal design is the one that assigns identical service rates to all $k$ servers.
    
	\item \textit{\textbf{Dynamic allocation.}} We quantify the greater robustness of dynamic capacity-allocation policies relative to static policies by proving that a dynamic policy is maximal if for each $j=1,\ldots,k-1$, the total service capacity allocated at any time $t$ to the $j$ shortest queues at that time $t$ does not exceed $\mt j/k$.
    
	 \item \textit{\textbf{Analytical contribution.}} In addition to our contributions to practice and to the FJR literature, we also make analytical contributions that extend beyond the specific setting considered here. First, in the context of FJR systems, we show that each $(n,k)$ system can be projected onto an induced $(k,k)$ system, which reduces the state and action spaces and overcomes the analytical difficulty introduced by redundancy.
     
     Second, we introduce a novel sample-path stochastic order based on the generalized Schur-convex (GSC) order, which facilitates direct comparison between the sample paths of multidimensional stochastic processes via coupling arguments. Specifically, the GSC order transforms the problem of proving stochastic dominance between multidimensional processes into a collection of one-dimensional comparisons involving sums of their ordered components; see \S \ref{subsection:kkproperty}.
\end{enumerate}

\subsection{Notation}

Throughout the paper, lowercase letters denote scalars, e.g., $x$; bold lowercase letters denote vectors, e.g., $\bm x$; uppercase letters denote one-dimensional stochastic processes, e.g., $X$; and bold uppercase letters denote vector-valued stochastic processes, e.g., $\bm X$. 

We let $\Z$ and $\R$ denote the sets of integers and real numbers, respectively, with $\Z_+ := \Z \cap [0,\infty)$ and $\R_+ := [0,\infty)$. For $a,b \in \Z_+$, let $\llbracket a, b\rrbracket := \Z_+ \cap [a,b]$. For a set $\mathbb A$, we write $\mathbb A^k$ for the set of $k$-dimensional vectors whose components take values in $\mathbb A$. The $0$th norm of a vector $\bm{x}\in\R^n$, namely, the number of non-zero components of $\bm x$, is denoted by
$
\|\bm{x}\|_0 = \sum_{i=1}^n \mathbbm{1}{(x_i \neq 0)},
$
where $\mathbbm{1}(\mathbb A)$ is the indicator function of the set $\mathbb A$. We write $\bm x^+$ for the componentwise positive part of a real-valued vector $\bm x$. We denote by $\bm e$ the vector whose components are all $1$, and by $\bm e_i$ the $i$th unit vector, whose $i$th component is equal to $1$ and all other components are $0$.
For $\bm x\in\Z_+^k$, $\mathcal R(\bm x)$ denotes the vector obtained by rearranging the components of $\bm x$ in nondecreasing order, breaking ties according to the original index order.
We denote by $\mathcal R(\Z_+^k)$ the set of $\Z_+^k$-valued vectors with non-decreasing component values, namely,
$
\mathcal R(\Z_+^k) := \{\bm x\in \Z_+^k: x_1 \le x_2 \le \cdots \le x_k\}.
$

For two real-valued stochastic processes $X$ and $Y$, we write $X \le_{\text{st}} Y$ if there exist processes $\mathcal{X}$ and $\mathcal{Y}$, defined on the same probability space, such that $\mathcal{X}(t)\le \mathcal{Y}(t)$ w.p.1 for all $t\ge 0$, and $X \deq \mathcal{X}$, $Y \deq \mathcal{Y}$. %

Let $\eta:=\{\eta(t): t\ge0\}$ denote the constant process satisfying $\eta(t)=1$ for all $t\ge0$. For $\bm b\in\mathbb R^k$, we write $\bm b\eta$ for the $\mathbb R^k$-valued constant process defined by $\bm b \eta(t) = \bm b$ for all $t\ge 0$.   
Finally, for a continuous-time Markov chain (CTMC) $\bm Q$ we write
	$\bm q\to \bm q' \text{ at rate }\lambda(\bm q, \bm q')$
if $\bm Q$ can make a direct transition from state $\bm q$ to state $\bm q'$ at a rate $\lambda(\bm q, \bm q')$.

\subsection{Organization}
The remainder of the paper is organized as follows. After a literature review in \S\ref{section:literature}, we introduce the model in detail in \S\ref{section:model}. In \S\ref{section:induce}, we introduce the induced $(k,k)$ system, obtained by projecting an $(n,k)$ system onto a lower-dimensional space, and develop the GSC order. We employ the induced $(k,k)$ system and the GSC order in \S\ref{section:static} and \S\ref{section: dynamic} to study the stability of $(n,k)$ systems under static and dynamic policies, respectively. We conclude with a summary in \S\ref{section:summary}. Finally, some proofs, in addition to auxiliary results and their proofs, appear in \S\ref{app:proofs}.

\section{Literature Review} \label{section:literature}

 \paragraph{\textbf{FJR systems.}}
We begin with a review of the literature on $(n,d,k)$ systems, where $n \ge d \ge k$. In these systems, each arriving job is split into $d$ tasks, which are assigned to $d$ out of the $n$ parallel stations according to a prescribed routing policy. A job departs once any $k$ of its $d$ tasks have completed service, at which point the remaining $d-k$ tasks are discarded. When $d=k=1$, the model reduces to a parallel-server system in which each job is routed to one of $n$ dedicated servers. Even in this simpler setting, stability can depend delicately on the routing policy, as demonstrated in \cite{moyal2022stability}.

The FJR systems studied in this paper correspond to the $(n,n,k)$ special case of the $(n,d,k)$ model, which we denote by $(n,k)$. These systems exhibit \textit{full redundancy}, since each job sends one task to every station. Full redundancy has been extensively studied in the literature and arises naturally in many applications; see, e.g., \cite{shah2015redundant, joshi2014delay, joshi2015queues, joshi2017efficient, lee2017speeding, lee2017mds}. In particular, \cite{shah2015redundant} show that, when service times are i.i.d.\ and either exponentially distributed or heavier tailed than exponential, the mean steady-state sojourn time in an $(n,k)$ system is minimized among all $(n,d,k)$ systems with $d<n$. Likewise, \cite{joshi2014delay} use an $(n,d,k)$ model with i.i.d.\ exponential service times to study distributed cloud-storage systems and demonstrate that full redundancy can substantially reduce download latency; see also \cite{joshi2015queues, lee2017mds}.

\cite{gardner2017redundancy} perform an exact analysis of the $(n,d,1)$ system with Poisson job arrivals, homogeneous servers, i.i.d.\ exponential service times, and a uniform-at-random task-dispatching policy. They derive closed-form expressions for both the mean sojourn time and the stability region. Their analysis suggests that the mean sojourn time is strictly decreasing in~$d$, implying that full redundancy minimizes the mean sojourn time.

A more refined characterization of the relationship between the service-time distribution and the benefits of redundancy is provided by \cite{joshi2017efficient} for the $(n,d,1)$ system. In particular, they prove that when service times are log-convex, full redundancy achieves the minimal mean sojourn time.

Regarding fork-join systems with no redundancy, \cite{flatto1984two} and \cite{nelson1988approximate} characterize the stability region of the $(2,2,2)$ system and derive the generating function of its stationary queue length distribution. 
For $(n,n,n)$ systems with $n>2$, only bounds on the mean sojourn time were established; see \cite{baccelli1989fork, varki1999mean, ko2008sojourn, thomasian2014analysis}, and the survey paper \cite{sethuraman2022analysis}. \cite{dai2020processing} consider a fork-join network consisting of multiple nested $(n,n,n)$ systems, which is treated as a stochastic processing network (SPN). %

It is significant that none of the papers cited above established a stability condition for the $(n,k)$ system with full redundancy, not even when the servers are homogeneous, except for the simple $(n,n,1)$ case, which is a special case of the $(n,d,1)$ systems studied in \cite{gardner2017redundancy, anton2021stability}.
 
\paragraph{\textbf{Flexible capacity allocation.}}
In the context of flexible capacity allocation, our paper contributes to the literature on flexible and collaborative servers, which studies queueing systems with servers that can be assigned to different queues and collaborate on processing a single task. See (\cite{down2006dynamic, bassamboo2012little}) for applications in parallel queueing systems; (\cite{andradottir2005throughput, down2006dynamic}) for applications in tandem queues; and (\cite{andradottir2003dynamic, pedarsani2014robust, pedarsani2014scheduling, dai2020processing}) for more general queueing networks. 

However, only a limited number of works consider flexible capacity allocation in fork-join networks, and none incorporates redundancy.
\cite{pedarsani2014scheduling} study the allocation of flexible and collaborative servers in a queueing network composed of multiple nested $(n,n,n)$ queues, and formulate a linear program for numerically computing a throughput-maximizing policy.
\cite{marin2016dynamic} analyze a saturated $(n,n,n)$ system with flexible and collaborative server allocation, and prove that the saturated system is unstable.
\cite{ozkan2019control} consider an $(n,d,d)$ system in which each queue has a dedicated server, together with a subset of queues that share a single flexible server. They formulate and solve the scheduling problem for the flexible server via an approximating Brownian control problem, and establish asymptotic optimality of the resulting policy under a stochastic-order optimality criterion. \cite{ozkan2022control} extends these results to systems with multiple flexible servers.

\section{The Model}\label{section:model}

We now describe the $(n,k)$ system in detail: The system consists of $n$ parallel stations, each comprising a queue (buffer), in which tasks wait for service, a single server, and a \textit{join-queue}, in which completed tasks wait until the $k$th task belonging to the same job has completed service, at which point the job departs the system, and all of its tasks (either in the queues or in the join-queues) are removed. Jobs arrive according to a Poisson process with rate $\la$. Upon arrival, each job is split into $n$ tasks, with one task routed to each station.

We assume that task sizes are i.i.d.\ exponential random variables with mean $1$, and that each server processes tasks at a rate equal to its assigned service capacity. (We use the terms ``capacity'' and ``service rate'' interchangeably, both referring to the linear rate at which a task's remaining workload is reduced.) More precisely, if server $i$ is assigned capacity $\mu_i$ over a time interval $[s,t)$, then, whenever tasks are present at station~$i$, workload is processed at constant rate $\mu_i$ throughout that interval.
(Example~\ref{exNH} below illustrates how the model also accommodates non-identically distributed task sizes.)
The total service capacity across all servers is fixed at $\mt>0$, which, without loss of generality, we normalize to $1$. %

The service discipline at each station is FCFS and non-idling. Upon completion of service, a task moves to the join-queue associated with the same station. A job departs the system as soon as any $k$ of its $n$ tasks have completed service. At that moment, the completed tasks are joined, while the remaining $n-k$ tasks of the same job are immediately removed from the stations at which they are still present. Note that the case $n=k$ corresponds to a system without redundancy.

For $t \ge 0$ and $i \in \llbracket 1, n\rrbracket$, let $Z(t)$ be the number of jobs in the system at time $t$; $Q_i(t)$ be the number of tasks in queue $i$, including the task in service; and $V_i(t)$ be the number of tasks waiting in join-queue $i$ at time $t \ge 0$. We write
$$\bm Q(t) :=\bigl(Q_1(t),\ldots,Q_n(t) \bigr) \quad \text{and} \quad \bm V(t) := \bigl( V_1(t),\ldots,V_n(t) \bigr),$$ 
and omit the time argument from the notation when referring to the corresponding stochastic processes. For example, $\bm Q := \{\bm Q(t) : t \ge 0\}$.

\subsection{Capacity-Allocation Policies} 

Under a static capacity allocation, the total service capacity $\mt=1$ is distributed among the $n$ servers, and the capacity assigned to each server remains fixed over time. In contrast, under a dynamic capacity-allocation policy $\pi$, the service capacity assigned to server~$i$ at time $t$, denoted by $\Gamma_i^\pi(t)$, is time-dependent. The corresponding capacity-allocation process is denoted by $\bm{\Gamma}^\pi := \{\bm{\Gamma}^\pi(t): t\ge 0\}$, where $\bm \Gamma^\pi$ takes values in
\begin{equation*}
	\Delta^n := \left\{\bm{d}\in \R^n_+: \sum_{i=1}^n d_i = 1\right\}.
\end{equation*}

\begin{remark}
	When $\mt \ne 1$, $\Gamma_i^\pi(t)$ represents the proportion of the total capacity allocated to server $i$, so that the instantaneous service rate at any time $t\ge0$ is $\Gamma_i^\pi(t)\mt$, for all $i \in \llbracket 1, n \rrbracket$. 
\end{remark}

A policy is said to be admissible if decisions are made at event epochs---namely, at job-arrival and task-completion times---and depend only on the state of the system at the corresponding decision epochs; see \S\ref{subsection:stateDescriptor} for a rigorous definition. We denote by $\Pi^{(n,k)}$ the class of all admissible policies for $(n,k)$ systems, and append a superscript $\pi$ to processes, e.g., $\bm Q^{\pi}$ and $\bm V^{\pi}$, whenever we wish to emphasize their dependence on a particular policy $\pi$.

The class of admissible policies encompasses a broad range of realistic settings, as illustrated by the following examples.

\begin{example}[static allocation] \label{exHomoServers}
    For $\bm d \in \Delta^n$, let $\pi_{\bm d}$ denote the policy satisfying $\bm{\Gamma}^{\pi_{\bm d}}(t) \equiv \bm d$ for all $t \ge 0$. Under $\pi_{\bm d}$, station~$i$ operates at the constant service rate~$d_i$. The special case in which $d_i = 1/n$ for all $i \in \llbracket 1,n \rrbracket$ corresponds to a system with statistically homogeneous servers, each operating at rate~$1/n$. Hence, static capacity allocation, and in particular, the classical homogeneous-server setting, arises as a special case of the dynamic-allocation framework.
\end{example} 

\begin{example}[non-identically-distributed tasks and heterogeneous servers] \label{exNH}
    An $(n,k)$ system in which task sizes in queue $i$ are exponentially distributed with rate $\nu_i$, and server $i$ processes work at a fixed rate $\mu_i$, can be represented within our framework by taking the total service capacity to be $\mt := \sum_{i=1}^n \nu_i \mu_i$, and by exercising the policy $\pi_{nh}$, defined via $\Gamma_i^{\pi_{nh}}(t) \equiv \nu_i \mu_i / \mt$ for all $t\ge 0$ and for all $i\in \N$. 
\end{example} 

\begin{example}[flexible and collaborative servers]\label{exFlexibleServers}
In this setting, there are $m \ge 1$ servers (possibly with $m>n$), with different fixed capacities $\mu_j$, $j\in \llbracket1,m\rrbracket$, and each server may be assigned to any of the $n$ stations. The servers are flexible and collaborative in the sense that they may move between stations and, when several servers are assigned to the same station, their capacities combine additively to determine the processing rate of the task in service. 

Such collaborative-server models are widely studied in the literature on dynamic server allocation; see, e.g., \cite{bassamboo2012little, pedarsani2014robust}. 
An important contemporary example is the aforementioned redundant multi-agent LLM inference systems. In this setting, the server pool is often composed of servers with different fixed capacities due to varied hardware generations and hardware configurations. For example, a single server can be formed by $1$, $8$, or $72$ GPUs, respectively (\cite{nvidia_dgx_superpod_2026}).
We refer to an FJR system with flexible and collaborative servers as an $(n,k)$-FC system.

Formally, let $\mathcal{P}=(\mathcal{P}_1,\mathcal{P}_2, \cdots, \mathcal{P}_n)$ be a partition of $\llbracket 1,m\rrbracket$, so that $\cup_{i=1}^n \mathcal P_i = \llbracket 1,m \rrbracket$ and $\mathcal P_i \cap \mathcal P_j = \varnothing$, for $i \ne j$, where $\varnothing$ denotes the empty set. Note that some of the $\mathcal{P}_i$'s may be empty. Let $ \bm{\mu}:= \{\mu_1, \mu_2, \cdots, \mu_m \}$ be the collection of service rates. A capacity-allocation policy $\pi_{FC}$ in an $(n,k)$-FC system selects actions in the set
\begin{equation} \label{eqDelta_u}
	\Delta^n_{\bm{\mu}}:=\left\{ \bm{d}\in \Delta^n: \text{there exists } \mathcal{P} \subseteq \llbracket 1,m\rrbracket ~\text{ s.t. }~d_i= \frac{ \sum_{j\in \mathcal{P}_i} \mu_j}{ \sum_{\ell=1}^n \mu_{\ell}}, i\in \llbracket 1, n \rrbracket \right\}.
\end{equation}
Since $\Delta^n_{\bm{\mu}}\subsetneq \Delta^n$, every such policy is admissible for the $(n,k)$ systems under our consideration. Hence $(n,k)$-FC systems are a subclass of the $(n,k)$ systems studied here.
\end{example}

\paragraph{\textbf{The class of static policies.}}
As demonstrated in Example \ref{exHomoServers}, static policies are a special subclass of admissible policies. Specifically, $\pi\in\Pi^{(n,k)}$ is static if $\bm{\Gamma}^\pi(t) \equiv \bm{d}$ for all $t\ge 0$ and for some $\bm{d}\in \Delta^n$; we denote that policy by $\pi_{\bm d}$. The class of all static policies is denoted by
\begin{equation*}
\hat{\Pi}^{(n,k)}:= \left\{ \pi_{\bm{d}} \in \Pi^{(n,k)}: \bm{d}\in \Delta^n \right\}.
\end{equation*}

\subsection{A CTMC Representation} \label{subsection:stateDescriptor}

We now show that $\bm Q$ evolves as a CTMC and characterize its transition rates. First, note that FCFS service for tasks at each station implies that jobs depart globally in FCFS order. Indeed, suppose Job B arrives after Job A. By the time Job B departs, exactly $k$ stations must have completed processing Job B’s tasks. Since each station operates under FCFS, any station that has processed a task of Job B must already have processed the corresponding task of Job A. Hence, Job A must have departed before Job B.

Next, index the jobs present at time $t$ by $1,\dots,Z(t)$ according to their order of arrival, where job $Z(t)$ is the youngest job, i.e., the last job to arrive prior to time $t$. Let $\tau_{j,i}(t)$ denote the task of job $j$ assigned to station $i$, for $j=1,\dots,Z(t)$. Observe that each station contains exactly $Z(t)$ tasks. Under FCFS service, the join queue at station $i$, whose length is $V_i(t)$, consists of the tasks $\tau_{1,i}(t),\ldots,\tau_{V_i(t),i}(t)$, whereas the queue at station $i$, including the task currently in service, consists of $\tau_{V_i(t)+1,i}(t),\ldots,\tau_{Z(t),i}(t)$. Consequently,
\begin{equation}\label{eq:zqv}
Z(t)=Q_i(t)+V_i(t),\qforallq i\in \llbracket 1, n \rrbracket.
\end{equation}

For $\bm q \in \Z^n_+$, let $\bm \sigma(\bm q) := \big(\sigma_1(\bm q), \dots, \sigma_n(\bm q) \big)$ be the permutation of the indices $\llbracket 1, n \rrbracket$ that orders elements of $\bm q$ in nondecreasing order, with ties broken according to the original index order. We drop the argument $\bm q$ when the context is clear, e.g., we write $q_{\sigma_i} := q_{\sigma_i(\bm q)}$. Then for $\bm \sigma(\bm Q(t))$ we have %
\begin{equation*}
	Q_{\sigma_1} (t) \le \cdots \le Q_{\sigma_n} (t), \text{ which together with \eqref{eq:zqv}, imply that } V_{\sigma_1} (t) \ge \cdots \ge V_{\sigma_n} (t).
\end{equation*}

Clearly, any nonempty join-queue $i$ 
necessarily contains task $\tau_{1,i}(t)$ from the oldest job in the system. 
Since this job departs as soon as $k$ of its tasks are processed, at most $k-1$ join-queues can be nonempty. Thus \begin{equation}\label{eq:join-empty-largest}
V_{\sigma_k}(t)=V_{\sigma_{k+1}}(t)=\cdots=V_{\sigma_n}(t)=0,
\end{equation}
which together with \eqref{eq:zqv} yields
\begin{equation}\label{eq:maxq}
Q_{\sigma_k}(t)=Q_{\sigma_{k+1}}(t)=\cdots=Q_{\sigma_n}(t)=Z(t)=\max_{1\le i\le n} Q_i(t).
\end{equation}
Hence,
\begin{equation*}
V_i(t) = Z(t) - Q_i(t) = \max_{1\le j\le n} Q_j(t)-Q_i(t) ,\qquad i\in \llbracket 1, n \rrbracket.
\end{equation*}

It follows that, given $\bm Q(t)$, the values of both $Z(t)$ and $\bm V(t)$ are determined uniquely via the functions $z: \R^n_+ \to \R_+$ and $\bm v: \R^n_+ \to \R^n_+$, respectively, where
\begin{equation*}
z(\bm q):=\max_{1\le i\le n} q_i,
\quad \text{and} \quad
\bm v(\bm q):=z(\bm q) \bm e- \bm q.
\end{equation*}
Therefore, $\bm Q(t)$ is a $\Z^n_+$-valued CTMC on the state space
\begin{equation} \label{StateSpace}
	\mathbb{S}^{(n,k)} := \{\bm q \in \Z_+^n: q_{\sigma_k} = q_{\sigma_{k+1}} = \cdots = q_{\sigma_{n}}\}.
\end{equation}
In particular, 
$
\|\bm v(\bm q)\|_0 \le k-1 \text{ for any }\bm q\in \mathbb S^{(n,k)}.
$

\paragraph{\textbf{Capacity-allocation policies.}}
An admissible dynamic policy $\pi \in \Pi^{(n,k)}$ can be represented by a map $\bm \gamma^\pi: \mathbb S^{(n,k)} \rightarrow \Delta^n$, where for each $\bm q \in \mathbb S^{(n,k)}$, the vector
\begin{equation*}
\bm\gamma^\pi(\bm q)=\bigl(\gamma_1^\pi(\bm q),\ldots,\gamma_n^\pi(\bm q)\bigr)\in\Delta^n
\end{equation*}
specifies the service capacity $\gamma^\pi_i(\bm q)$ assigned to server $i$ when the system is in state $\bm q$. Thus, the capacity-allocation process $\bm \Gamma^\pi$ satisfies $\bm \Gamma^\pi(t) = \bm \gamma^\pi(\bm Q(t))$.
Note that for $\bm d \in \Delta^n$ and a static policy $\pi_{\bm d}$, 
    $\bm \gamma^{\pi_{\bm d}}(\bm q) \equiv \bm d$ %
for all $\bm q \in \mathbb{S}^{(n,k)}$, so that $\bm \Gamma^{\pi_{\bm d}}(t) \equiv \bm d$ for all $t \ge 0$ w.p.1.

The next proposition summarizes the transition rates of the CTMC $\bm Q$.
\begin{proposition}\label{prop:q-transitions}
For a given $\bm q\in\mathbb S^{(n,k)}$ and $\pi\in\Pi^{(n,k)}$, let $\bm v = \bm v(\bm q)$ and $\bm \sigma = \bm \sigma(\bm q)$. The transitions of $\bm Q$ from state $\bm q$ are:
\begin{align}
&\bm q\to \bm q+\bm e \quad \text{at rate }\lambda. \label{eq:q-arrival}
\end{align}
If $\|\bm v\|_0<k-1$, then
\begin{align}
&\bm q\to (\bm q-\bm e_{\sigma_i})^+ \quad \text{at rate }\gamma_{\sigma_i}^\pi(\bm q),\quad i \in \llbracket 1, n \rrbracket. \label{eq:q-departure-nojoin}
\end{align}
If $\|\bm v \|_0=k-1$, then
\begin{align}
&\bm q\to (\bm q-\bm e_{\sigma_i})^+ \quad \quad \quad ~~\text{at rate }\gamma_{\sigma_i}^\pi(\bm q),\quad i\in \llbracket 1, k-1 \rrbracket; \label{eq:q-departure-short}\\
&\bm q\to \left(\bm q-\sum_{j=k}^n \bm e_{\sigma_j}\right)^+ \quad \text{at rate } \sum_{j=k}^n \gamma_{\sigma_j}^\pi(\bm q). \label{eq:q-departure-long}
\end{align}
\end{proposition}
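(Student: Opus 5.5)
We verify each transition directly from the dynamics described in \S\ref{section:model}, working from a fixed state $\bm q\in\mathbb S^{(n,k)}$ with $\bm v=\bm v(\bm q)$ and $\bm\sigma=\bm\sigma(\bm q)$. Since task sizes are i.i.d.\ exponential with mean $1$ and server $i$ works at rate $\gamma_i^\pi(\bm q)$, memorylessness gives that, while in state $\bm q$, the remaining service time of the task in service at a nonempty station $i$ is exponential with rate $\gamma_i^\pi(\bm q)$. These clocks are independent of each other and of the Poisson arrival stream. Since $\bm\gamma^\pi$ depends only on $\bm q$, the process $\bm Q$ is a CTMC once we check that each event produces a deterministic next state that is a function of $\bm q$ only. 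The proof then splits into the arrival event and the service-completion events.

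\textbf{Arrival.} An arrival adds one task to each of the $n$ queues and touches no join-queue. Hence $\bm q\to\bm q+\bm e$ at rate $\lambda$, which is \eqref{eq:q-arrival}. The new state lies in $\mathbb S^{(n,k)}$ because adding a constant vector preserves the equalities $q_{\sigma_k}=\cdots=q_{\sigma_n}$.

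\textbf{Completion at station $\sigma_i$.} If $q_{\sigma_i}=0$, the server idles and the transition to $(\bm q-\bm e_{\sigma_i})^+=\bm q$ is fictitious, so writing it with the positive part is harmless. If $q_{\sigma_i}>0$, the task in service is $\tau_{v_{\sigma_i}+1,\sigma_i}$, which belongs to job $v_{\sigma_i}+1$. It enters join-queue $\sigma_i$, so the number of completed tasks of job $v_{\sigma_i}+1$ becomes
\[
\#\{j: v_j\ge v_{\sigma_i}+1\}+1 .
\]
A join occurs only if this number equals $k$, and the departing job must be job $1$, since departures are FCFS. Therefore a join requires $v_{\sigma_i}=0$, i.e.\ $\sigma_i$ is a station with an empty join-queue and $q_{\sigma_i}=z(\bm q)$, together with $\|\bm v\|_0=k-1$.

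\emph{Case $\|\bm v\|_0<k-1$.} No join occurs, so only $Q_{\sigma_i}$ decreases by one. This gives \eqref{eq:q-departure-nojoin}.

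\emph{Case $\|\bm v\|_0=k-1$ and $i\le k-1$.} Station $\sigma_i$ has $v_{\sigma_i}>0$. This holds because the $k-1$ positive components of $\bm v$ are exactly those of the $k-1$ smallest components of $\bm q$, by the ordering $V_{\sigma_1}\ge\cdots\ge V_{\sigma_n}$ and \eqref{eq:join-empty-largest}. So no join occurs, which gives \eqref{eq:q-departure-short}.

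\emph{Case $\|\bm v\|_0=k-1$ and $i\ge k$.} The completing task is job $1$'s $k$th task, so job $1$ departs. Its $k$ completed tasks leave the join-queues, which leaves $\bm Q$ unaffected on those stations. Its $n-k$ unfinished tasks sit, one each, in the queues of the stations $\sigma_j$ with $j\ge k$, including the completing station $\sigma_i$. So each $Q_{\sigma_j}$, $j\in\llbracket k,n\rrbracket$, drops by one, giving the state $\bigl(\bm q-\sum_{j=k}^n\bm e_{\sigma_j}\bigr)^+$. This target does not depend on which $i\ge k$ completed. Summing the independent exponential rates over these stations yields the rate $\sum_{j=k}^n\gamma^\pi_{\sigma_j}(\bm q)$ in \eqref{eq:q-departure-long}. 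When $z(\bm q)=0$ all these rates correspond to fictitious transitions, consistent with the positive part.

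\textbf{Main obstacle.} The delicate point is the bookkeeping that links the join-queue contents to the ordering $\bm\sigma$. We must show the following three facts:
\begin{itemize}
\item a completion triggers a join precisely when it occurs at a station among $\sigma_k,\dots,\sigma_n$ and $\|\bm v\|_0=k-1$;
\item the removed redundant tasks are exactly the tasks of job $1$ at those stations;
\item ties among the largest components do not matter, since all of them equal $z(\bm q)$ by \eqref{eq:maxq}.
\end{itemize}
This rests on \eqref{eq:zqv}, \eqref{eq:join-empty-largest}, and the FCFS ordering of tasks established above.
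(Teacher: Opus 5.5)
Your proposal is correct and follows essentially the paper's argument. It uses the same case split on $\|\bm v\|_0$ and on whether the completing station is among $\sigma_1,\dots,\sigma_{k-1}$ or $\sigma_k,\dots,\sigma_n$, and decides whether a join occurs by counting the completed siblings of the task in service. The only difference is cosmetic: the paper bounds that count directly by $i-1\le k-2$, whereas you invoke global FCFS departures to force the departing job to be job~$1$, which is equally valid.

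One wording fix is needed in the last case. The task of job~$1$ that completes is the one leaving station $\sigma_i$, so it is not one of the $n-k$ unfinished tasks. Those unfinished tasks sit at the \emph{other} stations $\sigma_j$, $j\in\llbracket k,n\rrbracket\setminus\{i\}$, and the completion and removals together give the unit decrement of every $Q_{\sigma_j}$, $j\in\llbracket k,n\rrbracket$.
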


The proof is given in \S \ref{app:q-transitions}.
The transition in \eqref{eq:q-arrival} corresponds to an arrival of a job; the transitions in \eqref{eq:q-departure-nojoin} and \eqref{eq:q-departure-short} correspond to the event of a task completion that does not trigger the removal of redundant tasks. Finally, the transition in \eqref{eq:q-departure-long} corresponds to the event of a task completion that causes its job to depart, and in turn, triggers the removal of that job's redundant tasks.  

\subsection{The Nominal Traffic Intensity and the Maximal Stability Region}\label{subsection:stability}

Since the average amount of workload brought to the system by a job arrival depends on the average processing capacity provided to redundant tasks (that do not finish their processing), it is not immediately clear what the nominal traffic intensity is. 
 
Let
\begin{equation} \label{rho}
	\rho:= \frac{k \lambda}{\mt} = k \lambda, 
\end{equation}
where the second equality follows from our assumption that $\mt = 1$, and let $\bm Q^{\pi,\rho}$ denote the queue process in an $(n,k)$ system with a given $\rho$ in \eqref{rho} that operates under policy $\pi$. Define
\begin{equation*}
\mathcal S^{(n,k)}(\pi) := \{\rho \in \mathbb{R}_+: \bm{Q}^{\pi,\rho} \text{ is positive recurrent} \}.
\end{equation*}
Due to Theorem \ref{thm:maximal-stability} below, we refer to $\rho$ as the {\it traffic intensity} and to $\mathcal S^{(n,k)}(\pi)$ as the {\it stability region} under a given policy $\pi$. We further denote by $\bar{\mathcal S}^{(n,k)}$ the {\it maximal stability region}, namely the set of traffic intensities for which there exists an admissible policy under which the system is stable. We thus say that a policy $\pi$ is \textit{maximal} if $\mathcal S^{(n,k)}(\pi) = \bar{\mathcal S}^{(n,k)} = [0,1)$, where the last equality holds due to the following theorem, whose proof is given in \S \ref{subsection:maximalStability}.

\begin{theorem}\label{thm:maximal-stability}
	The maximal stability region of an $(n,k)$ system is $\bar{\mathcal S}^{(n,k)} = [0,1)$.
\end{theorem}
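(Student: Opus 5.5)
We prove the two inclusions separately: instability of every admissible policy when $\rho\ge 1$, and existence of a stabilizing policy when $\rho<1$.

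\textbf{Necessity ($\rho\ge1$ implies instability for every $\pi\in\Pi^{(n,k)}$).} The key observation is that every departing job has had exactly $k$ of its tasks completed, i.e., $k$ unit-mean exponential workloads fully processed. Total processing rate is at most $\mt=1$. So the cumulative number of task completions by time $t$, $C(t)$, is bounded by a rate-$1$ Poisson process: uniformize all servers at total rate $1$, so that completions occur at points of a rate-$1$ Poisson process $N$ (thinned when servers idle or redundant work is lost). Job departures $D(t)$ satisfy $kD(t)\le C(t)\le N(t)$. Hence
\[
Z(t)=Z(0)+A(t)-D(t)\ge A(t)-N(t)/k ,
\]
with $A$ Poisson of rate $\lambda$. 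Suppose $\bm Q^{\pi}$ were positive recurrent. Then $Z(t)/t\to0$ a.s. and, by renewal-reward, the departure rate equals $\lambda$, so $k\lambda\le 1$. This gives $\rho\le1$ and rules out $\rho>1$. For $\rho=1$ we need a strict argument. We pick a state (e.g.\ $\bm q=\bm 0$) where the policy necessarily wastes capacity: idle servers when the system is empty. Positive recurrence gives a positive long-run fraction of time at $\bm 0$, during which the full rate is wasted. So the long-run completion rate is strictly less than $1$, giving $k\lambda<1$, a contradiction. We formalize this with the uniformized chain: wasted Poisson points have positive density, so $\lim C(t)/t<1$.

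\textbf{Sufficiency ($\rho<1$ implies a stable admissible policy exists).} The cleanest choice is the static policy that puts capacity $1/k$ on each of $k$ servers and $0$ on the rest. This effectively produces a $(k,k)$ fork-join system with homogeneous rate-$1/k$ servers. The zero-rate stations never complete tasks, and their tasks are removed when jobs depart. The join condition "any $k$ of $n$ completed" then means "all $k$ active completed". Tasks in the zero-capacity queues equal $Z(t)$, which is determined by the active queues, so positive recurrence reduces to that of the $(k,k)$ system. That system is a set of $k$ queues fed by a common Poisson($\lambda$) arrival stream, each an $M/M/1$ queue with rate $1/k$. The job count is $Z=\max_i Q_i$, and each marginal $Q_i$ is stable since $\lambda<1/k$ iff $\rho<1$.

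The remaining step is positive recurrence of the joint chain. We use the Lyapunov function $L(\bm q)=\sum_{i\le k} q_i$. Its drift is $k\lambda-\sum_i \frac1k\mathbbm 1(q_i>0)$. This is negative only when all active queues are nonempty, so it is not enough alone. Instead we use $L(\bm q)=\sum_i q_i^2$, whose drift is $\sum_i\bigl(2q_i(\lambda-\tfrac1k\mathbbm1(q_i>0))+O(1)\bigr)$. This is $\le -c|\bm q|+O(1)$, since each coordinate term is $-2q_i(1/k-\lambda)$ when $q_i>0$. Foster's criterion with a finite exceptional set then gives positive recurrence. Equivalently, we can invoke that a product-like coupled $M/M/1$ chain with stable marginals is positive recurrent.

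\textbf{Main obstacle.} The delicate part is the boundary case $\rho=1$ in the necessity direction. The fluid bound $k\lambda\le 1$ is easy, but strictness requires showing that any positive recurrent policy loses a positive fraction of capacity. Capacity is lost through idling, through allocation to stations in the join queue, and through processing redundant tasks. I would first try the argument via time spent at the empty state $\bm 0$, which has positive stationary probability under positive recurrence and during which all capacity is idle. This yields a strict inequality between the stationary departure rate $\lambda$ and $1/k$.
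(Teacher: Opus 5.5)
Your proof is correct, but the necessity direction follows a genuinely different route from the paper's.

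\textbf{How the paper argues.} The paper never counts capacity. It projects onto the induced $(k,k)$ system and notes that every induced allocation satisfies $\bm y\le_{\text{GSC}}\bm e_k$. It then uses the GSC sample-path coupling (Lemma~\ref{lem:stochasticOrder}(ii), built on Lemma~\ref{lem:GSCpreserve}) to compare an arbitrary policy with the pooling policy $\pi_{\text{pool}}$. That policy is an $M/G/1$ queue with Erlang-$k$ service, hence stable iff $k\lambda<1$. Sufficiency comes from embedding $\pi_{\text{pool}}$ into the $(n,k)$ system with zero capacity at stations $k+1,\dots,n$.

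\textbf{How your argument differs.} You use conservation instead. Every completed task is eventually one of the $k$ joined tasks of its job, so $C(t)=kD(t)+\sum_i V_i(t)$ holds exactly. Completions are dominated by a rate-$1$ Poisson process. A small correction: with exponential tasks, no uniformized point is lost to redundancy; the only waste is capacity placed on empty queues. The step is not harmed, since you only need the upper bound.

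\textbf{Making the strict inequality clean.} The strict inequality is cleanest over one regeneration cycle $T$ from $\bm 0$ back to $\bm 0$. Within a cycle, arrivals equal departures and completions equal $k$ times departures. Optional stopping for the compensated counting processes then gives $k\lambda\,\mathbb E[T]=\mathbb E[\text{completions in the cycle}]\le \mathbb E[T]-1/\lambda$. The last inequality holds because the initial sojourn at $\bm 0$ has mean $1/\lambda$ and wastes all capacity. Hence $(1-k\lambda)\mathbb E[T]\ge 1/\lambda>0$, which treats $\rho>1$ and $\rho=1$ at once.

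This step uses that $\bm 0$ lies in the recurrent class, which matches the paper's convention of regeneration at the empty state. Without that convention you need a small extra argument. In a closed class with minimal job count $z_0\ge 1$, completion-only transitions cannot trigger departures. They strictly increase $\sum_i v_i$, which is bounded by $(k-1)z_0$. So they must reach a state of the class where all capacity sits on empty queues.

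\textbf{What each route buys.} Your route is more elementary and self-contained. The paper's route gives more: the pathwise bound $Z^{\pi_{\text{pool}}}\le_{\text{st}}Z^{\pi}$ for every admissible $\pi$. It also builds the coupling machinery reused for Theorems~\ref{thm:sufficient} and~\ref{thm:maximalControl}, which capacity counting alone cannot deliver.

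\textbf{Sufficiency.} Your sufficiency step is fine. It uses the static allocation $\tilde{\bm d}^k$, which the paper also records in \S\ref{section:static}, together with the drift bound for $\sum_{i\le k}q_i^2$. It actually supplies the joint positive-recurrence argument that Lemma~\ref{lemma:kkstatic} asserts without proof.
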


It follows from Theorem \ref{thm:maximal-stability} that $\rho$ in \eqref{rho} is indeed the traffic intensity, as there exist policies that stabilize the system for every $\rho < 1$, but no such policy exists when $\rho \ge 1$. %

\section{The Induced $(k,k)$ System} \label{section:induce}

A difficulty in the analysis of $(n,k)$ systems is that a single service completion may trigger the simultaneous removal of multiple redundant tasks, while the identity of the tasks to be removed is not known in advance. To address this challenge, we project the $(n,k)$ system onto a lower-dimensional space and interpret the resulting projected process as the queue process of a $(k,k)$ system (which has no redundancy). We refer to this projected system as the \emph{induced} $(k,k)$ system. We then show that the induced and inducing systems are either both stable or both unstable (see Lemma \ref{LemStable} below). Consequently, stability of the original FJR system can be established by analyzing the simpler induced $(k,k)$ system.

\subsection{The Projection Mapping} \label{subsection:inducedSystem}

For $i \in \llbracket 1, k\rrbracket$ and $\bm q \in \mathbb S^{(n,k)}$, let $\phi_i(\bm q) := q_{\sigma_i}$ and define the map $\bm \phi:\mathbb S^{(n,k)} \to \mathcal R(\Z_+^k)$  via
\begin{equation*}
\bm \phi(\bm q) = \bigl(\phi_1(\bm q), \ldots, \phi_k(\bm q) \bigr) = (q_{\sigma_1}, \ldots, q_{\sigma_k}).
\end{equation*}
Then $\bm \phi$ is an operator that maps each $\bm q \in \mathbb S^{(n,k)}$ to the $k$-dimensional vector of its $k$ smallest components, arranged in ascending order.

Next, for a given policy $\pi\in\Pi^{(n,k)}$, define the map $\bm \psi^\pi: \mathbb S^{(n,k)} \rightarrow \Delta^k$ via
\begin{equation*}
\psi_i^\pi(\bm q):=\gamma_{\sigma_i}^\pi(\bm q),\quad i\in \llbracket 1,k-1 \rrbracket,
\qquad
\psi_k^\pi(\bm q):=\sum_{j=k}^n \gamma_{\sigma_j}^\pi(\bm q).
\end{equation*}
Specifically, for each $i\in \llbracket 1,k-1 \rrbracket$, the $i$th coordinate of $\bm \psi^\pi(\bm q)$ corresponds to the service capacity of the $i$th shortest queue when the system is in state $\bm q$. The final coordinate $\psi^{\pi}_k(\bm q)$ is the total service capacity allocated to the remaining $n-k+1$ (longest) queues.
We refer to $\bm x = \bm \phi(\bm q)$ as the \textit{induced state}, and to $\bm y = \bm \psi^\pi(\bm q)$ as the \textit{induced action}. Then the induced system has a direct transition from $\bm x = \bm \phi(\bm q)$ to $\bm x' = \bm \phi( \bm q')$ if and only if $\bm q \to \bm q'$, and corresponding transitions in both systems occur at the same rate. We thus use the transition notation $\bm x \to \bm x'$ in this case.

\begin{theorem}\label{thm:induce}
Consider an $(n,k)$ system operating under policy $\pi\in\Pi^{(n,k)}$. For $\bm q\in \mathbb S^{(n,k)}$, let $\bm x = \bm \phi (\bm q)$ and $\bm y = \bm \psi^\pi (\bm q)$. %
Then 
\begin{align}
&\bm x\to \bm x+\bm e  \quad \text{at rate }\lambda;\label{eq:kk_transition1}\\
&\bm x\to \mathcal R\bigl((\bm x-\bm e_i)^+\bigr) \quad \text{at rate } y_i,\quad i\in \llbracket 1, k \rrbracket.\label{eq:kk_transition2}
\end{align}
\end{theorem}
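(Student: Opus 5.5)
The plan is to push each transition of Proposition \ref{prop:q-transitions} through the projection $\bm \phi$, and then check that the image state and its rate match \eqref{eq:kk_transition1}--\eqref{eq:kk_transition2}. Two facts will be used repeatedly. First, by \eqref{StateSpace} we have $q_{\sigma_k}=\cdots=q_{\sigma_n}=z(\bm q)$, so $\bm x=\bm\phi(\bm q)$ is nondecreasing and satisfies $x_k=z(\bm q)$. Second, adding or subtracting a common constant on a set of coordinates whose values are tied preserves the relative order of all coordinates. Subtracting $1$ from a single coordinate can only break ties from below, and re-sorting then amounts to applying $\mathcal R$.

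\textbf{Arrival.} Adding $\bm e$ to $\bm q$ preserves $\bm\sigma$, so $\bm\phi(\bm q+\bm e)=\bm x+\bm e$ at rate $\lambda$, which is \eqref{eq:kk_transition1}.

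\textbf{Departures with $i\le k-1$, in either case.} Removing one task from queue $\sigma_i$ lowers the $i$th smallest component by one, or leaves it unchanged if it is $0$, thanks to the positive part. The $k$ smallest components of $(\bm q-\bm e_{\sigma_i})^+$ are then exactly the multiset $\{x_1,\ldots,(x_i-1)^+,\ldots,x_k\}$. Indeed, the only changed coordinate decreased, so it stays among the $k$ smallest, and no other coordinate moves. Hence $\bm\phi\big((\bm q-\bm e_{\sigma_i})^+\big)=\mathcal R((\bm x-\bm e_i)^+)$ at rate $\gamma^\pi_{\sigma_i}=y_i$.

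\textbf{Departures with $i\ge k$, two cases.}
\begin{itemize}
\item \emph{Case $\|\bm v\|_0=k-1$.} Transition \eqref{eq:q-departure-long} decreases all of $q_{\sigma_k},\ldots,q_{\sigma_n}$, which share the common value $x_k$, by one. Since $\|\bm v\|_0=k-1$ forces $q_{\sigma_{k-1}}<z(\bm q)$, we get $x_k-1\ge x_{k-1}$. So after the move the $k$ smallest entries are $(x_1,\ldots,x_{k-1},x_k-1)$, and this vector is already sorted. It equals $\mathcal R((\bm x-\bm e_k)^+)$, with aggregate rate $\sum_{j\ge k}\gamma_{\sigma_j}=y_k$.
\item \emph{Case $\|\bm v\|_0<k-1$.} At least $n-k+2$ coordinates equal $z$, in particular $q_{\sigma_{k-1}}=z$. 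Completing a task at any station $\sigma_j$ with $j\ge k$ lowers one maximal coordinate to $z-1$ (or leaves $\bm q$ unchanged when $z=0$). The $k$ smallest entries then become $(x_1,\ldots,x_{k-1},x_k-1)$ re-sorted, which is $\mathcal R((\bm x-\bm e_k)^+)$.
\end{itemize}
In the second case, distinct $j\ge k$ produce distinct $\bm q'$ but the same image $\bm x'$. Summing their rates $\gamma_{\sigma_j}$ over $j\ge k$ gives $y_k$, consistent with how rates of the induced transition $\bm x\to\bm x'$ are to be read.

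\textbf{Main obstacle.} The delicate step is the bookkeeping in the second case. The tie-breaking in $\bm\sigma$, the removal of redundant tasks, and the collapse of several $\bm q$-transitions onto one $\bm x$-transition all meet there. One must check that no $i\le k-1$ transition lands on the same $\bm x'$ with a conflicting description. For $i\le k-1$ with $x_i=x_k$, the image still equals $\mathcal R((\bm x-\bm e_i)^+)=\mathcal R((\bm x-\bm e_k)^+)$, so the rates simply add consistently. The $z=0$ boundary is handled by the positive parts.
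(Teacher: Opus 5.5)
Your proposal is correct and follows essentially the same route as the paper. You push each transition of Proposition \ref{prop:q-transitions} through $\bm\phi$, treat departures with $i\le k-1$ uniformly, split departures with $i\ge k$ according to whether $\|\bm v\|_0=k-1$ (block removal) or $\|\bm v\|_0<k-1$ (single removal), and aggregate the rates $\gamma^\pi_{\sigma_j}$ over $j\ge k$ into $y_k$. The paper does the same, except that it packages the three projection identities you verify inline as Lemma \ref{lem:projection-identities}, and it omits your (harmless) remark that tied coordinates can make different labeled transitions share the same target.
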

The proof of Theorem \ref{thm:induce} appears in \S \ref{app:induce}. 

Theorem \ref{thm:induce} demonstrates that the induced system can be viewed as a $(k,k)$ system with its queues numbered in nondecreasing order (namely, with the first queue being the smallest).  
Indeed, consider a $(k,k)$ system in which the $i$th shortest queue, having length $x_i$, is served at rate $y_i$, $i\in \llbracket 1, k \rrbracket$ at some time $t$. An arrival of a job, which occurs at rate $\lambda$, increases each queue by $1$ and does not change the order of queue lengths. Therefore, $\bm x \to \bm x + \bm e$ at rate $\lambda$. A task completion at the $i$th shortest queue, which happens at rate $y_i$, will decrease $x_i$ by $1$ if $x_i>0$, followed by reordering the queue vector. Therefore, the resulting state is $\mathcal R\big((\bm x - \bm e_i)^+\big)$, and we have $\bm x \to \mathcal R\big((\bm x - \bm e_i)^+\big)$ at rate $y_i$. In particular, the transitions of this  $(k,k)$ system with ordered queue lengths are the same as the transitions of the induced system in \eqref{eq:kk_transition1} and \eqref{eq:kk_transition2}.

Then $(\bm X^\pi, \bm Y^\pi)$, defined via $\bm X^\pi(t) := \bm \phi \bigl( \bm Q^\pi(t) \bigr)$ and $\bm Y^\pi(t) := \bm \psi^\pi \bigl( \bm Q^\pi(t) \bigr)$, can be interpreted as the ordered queue process and the ordered capacity-allocation process, respectively, of a $(k,k)$ system in which the $i$th shortest queue at time $t$ (having length $X^\pi_i(t)$) is processed at a rate $Y^\pi_i(t)$, $i\in \llbracket 1, k \rrbracket$.
Note that $(\bm X^\pi, \bm Y^\pi)$ is not necessarily Markov, since its dynamics are governed by the transitions of the original $(n,k)$ system. We thus say that the induced $(k,k)$ system is stable if it regenerates in finite expected time, with its regeneration state being the empty state. 

Observing that
\begin{equation} \label{eq:identity-z-x}
Z^\pi(t) = \max_{i\in \llbracket 1, n \rrbracket} Q^\pi_i(t) = \max_{i\in \llbracket 1, k \rrbracket} X^\pi_i(t) = X^\pi_k(t),
\end{equation}
where the first equality is due to \eqref{eq:maxq}, the second equality follows from the definition of $\bm \phi$, and the last equality holds because $\bm X^\pi(t)\in \mathcal R(\Z_+^k)$, the following result is immediate.

\begin{lemma} \label{LemStable}
    An $(n,k)$ system is stable if and only if its induced $(k,k)$ system is stable.
\end{lemma}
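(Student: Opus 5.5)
The plan is to reduce both notions of stability to the same statement about returns to the empty state, using the identity \eqref{eq:identity-z-x}. Stability of the $(n,k)$ system means positive recurrence of the CTMC $\bm Q^\pi$. The state space $\mathbb S^{(n,k)}$ is irreducible from $\bm 0$: arrivals occur at rate $\lambda>0$, and from any state the chain can reach $\bm 0$ through completions at servers with positive capacity together with joins. So positive recurrence is equivalent to a finite expected return time to $\bm 0$ starting from $\bm 0$. If irreducibility fails for a degenerate policy, I will work with the communicating class containing $\bm 0$ and read the stability definitions relative to that class. Stability of the induced system was defined as regeneration at the empty state in finite expected time.

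First, I will show that the two empty states coincide pathwise. Since $\bm Q^\pi(t)\in\mathbb Z_+^n$, \eqref{eq:maxq} gives $\bm Q^\pi(t)=\bm 0$ if and only if $Z^\pi(t)=\max_i Q_i^\pi(t)=0$. By \eqref{eq:identity-z-x} this holds if and only if $X_k^\pi(t)=0$. Because $\bm X^\pi(t)\in\mathcal R(\mathbb Z_+^k)$ has ordered components, $X_k^\pi(t)=0$ holds if and only if $\bm X^\pi(t)=\bm 0$. The two processes are driven by the same sample path, since $\bm X^\pi(t)=\bm\phi(\bm Q^\pi(t))$. Hence the hitting and return times of the empty state are identical random variables for the two systems.

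Second, I will check that each empty-state visit of $\bm X^\pi$ is a genuine regeneration. At such a time $\bm Q^\pi=\bm 0$, and by the strong Markov property of $\bm Q^\pi$ the future of $(\bm Q^\pi,\bm X^\pi,\bm Y^\pi)$ is independent of the past and has the same law at every visit. Consequently, the cycle lengths of the induced system are exactly the return times of $\bm Q^\pi$ to $\bm 0$. A finite expected cycle length for one system is then the same as for the other, and the equivalence follows.

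The main obstacle I expect is only definitional: making sure that ``stable'' for the non-Markov induced pair is read as regeneration at $\bm 0$ with finite mean cycle length. One must also confirm that positive recurrence of $\bm Q^\pi$ is equivalent to a finite mean return time to the specific state $\bm 0$. This is standard for an irreducible CTMC with bounded total jump rate $\lambda+1$, which rules out explosion. Beyond these points, the argument is immediate from \eqref{eq:identity-z-x}.
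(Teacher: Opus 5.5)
Your proposal is correct and follows the paper's route. The paper treats the lemma as immediate from the identity $Z^\pi(t)=X^\pi_k(t)$ in \eqref{eq:identity-z-x}: $\bm Q^\pi$ and its induced $(k,k)$ system hit the empty state at exactly the same times, and your argument spells out why those common hitting times give identical regeneration cycles and hence the same finiteness of the mean cycle length.

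One small correction: your claim that $\bm 0$ is reachable from every state is false in general, since a policy may place all capacity on empty stations. This does not affect the proof, because both the paper and your core argument read stability on both sides as finite mean return time to the empty state.
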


\subsection{The GSC Order} \label{subsection:kkproperty}

The proofs of the main results build on the following GSC order for vectors in $\R^k$; see \cite[\S 4]{moyal2022stability}) for background.

\begin{definition}[GSC order]
	For $\bm a, \bm b\in \R^k$, we write $\bm a \le_{\text{GSC}} \bm b$ if
\begin{equation*}
	\sum_{i=m}^k a_i \le \sum_{i=m}^k b_i \text{ for all } m \in \llbracket 1, k\rrbracket.
\end{equation*}
\end{definition}
 Based on the GSC order for vectors, we consider the following stochastic order for processes.

\begin{definition}[GSC sample-path stochastic order] \label{def:GSCstochstic}
	For two $\R^k$-valued stochastic processes $\bm A$ and $\bm B$, we write $\bm A \le_{\text{st,GSC}} \bm B$ if there exist two stochastic processes $\mathcal{\bm{A}}$ and $\mathcal{\bm{B}}$ defined on a common probability space, such that $\bm A \deq \mathcal{\bm A}$, $\bm B \deq \mathcal{\bm B}$, and $\mathcal{A}(t) \le_{\text{GSC}} \mathcal{B}(t)$ w.p.1 for all $t \ge 0$. 
\end{definition}

\begin{lemma} \label{lem:stochasticOrder}
	Consider two $(n,k)$ systems operating under policies $\pi$ and $\pi^{\prime}$, both having the same arrival rate $\lambda$, and consider the respective induced $(k,k)$ systems $(\bm X^\pi, \bm Y^\pi)$ and $(\bm X^{\pi^\prime}, \bm Y^{\pi^\prime})$, where $\bm Y^{\pi^\prime} = \bm b \eta$, for some $\bm b \in \Delta^k$.
    \begin{enumerate}
    \item[(i)] If $\bm Y^\pi \ge_{\text{st,GSC}} \bm Y^{\pi^\prime}$, then  $\mathcal S^{(n,k)} (\pi^\prime) \subseteq \mathcal S^{(n,k)} (\pi)$.

    \item[(ii)] If $\bm Y^{\pi^\prime} \ge_{\text{st,GSC}}~ \bm Y^{\pi}$, then  $\mathcal S^{(n,k)} (\pi) \subseteq \mathcal S^{(n,k)} (\pi^\prime)$.
    \end{enumerate}
\end{lemma}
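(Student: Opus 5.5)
We will prove the lemma with a sample-path coupling on the induced $(k,k)$ systems of Theorem \ref{thm:induce}. The goal is to show that if $\bm Y^\pi \ge_{\text{st,GSC}} \bm Y^{\pi'}$, then the ordered queue vectors satisfy $\bm X^{\pi} \le_{\text{st,GSC}} \bm X^{\pi'}$. In particular the last components then satisfy $X^\pi_k \le_{\text{st}} X^{\pi'}_k$, i.e., $Z^\pi \le_{\text{st}} Z^{\pi'}$ by \eqref{eq:identity-z-x}. Stability of $\pi'$ then transfers to $\pi$ through Lemma \ref{LemStable}. Part (ii) is the same argument with the roles of the two systems reversed.

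\textbf{Coupling construction.} Both systems use the same Poisson arrival stream of rate $\lambda$ and a common Poisson stream of potential service events of rate $1=\mt$. Each potential event carries a uniform mark $U\in[0,1)$. In the $\pi$-system, a potential event at time $t$ is a completion at the $i$th shortest queue if
\[
U\in\Bigl[\textstyle\sum_{j>i}Y^\pi_j(t),\ \sum_{j\ge i}Y^\pi_j(t)\Bigr),
\]
and the analogous rule with the constant $\bm b$ is used in the $\pi'$-system. The intervals are indexed from the top (longest queue) downward. With this choice, the event "a completion occurs among the $k-m+1$ longest queues" is $\{U<\sum_{j\ge m}y_j\}$ in each system. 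The GSC inequality $\sum_{j\ge m}Y^\pi_j\ge\sum_{j\ge m}b_j$ therefore says exactly that whenever the $\pi'$-system serves one of its top $k-m+1$ ordered queues, the $\pi$-system does too. By Theorem \ref{thm:induce}, the marginals are correct, so the coupled processes are equal in law to $\bm X^\pi$ and $\bm X^{\pi'}$.

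\textbf{Induction over events.} We show that $\bm X^\pi(t)\le_{\text{GSC}}\bm X^{\pi'}(t)$ is preserved at every event, starting from the common empty state.
\begin{enumerate}
\item[(a)] An arrival adds $\bm e$ to both vectors, so every tail sum increases by the same amount $k-m+1$ and the order is preserved.
\item[(b)] At a service event, the $\pi$-system removes one unit from ordered position $i$ and the $\pi'$-system from position $i'$. The coupling gives $i\ge i'$.
\item[(c)] Reordering $\mathcal R((\bm x-\bm e_i)^+)$ amounts to removing a unit from the last index that ties with $x_i$. That index is still $\ge i$, and the resulting tail sums are $S_m(\bm x)-\mathbbm 1(\text{removal index}\ge m,\ x_i>0)$.
\end{enumerate}

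\textbf{Main obstacle.} The hardest step is the case in which a tail sum $S_m$ is tight, i.e., $S_m(\bm x^\pi)=S_m(\bm x^{\pi'})$, and the $\pi'$-system decreases $S_m$ while the $\pi$-system does not. Under the coupling this can only happen when the $\pi$ removal hits an empty queue, or when tie-shifting moves the $\pi'$ removal above $m$ while the $\pi$ removal stays below it. We plan to exclude both possibilities using the tightness structure.
\begin{itemize}
\item If $S_m$ is tight and $S_{m+1}(\bm x^\pi)\le S_{m+1}(\bm x^{\pi'})$, then $x^\pi_m\ge x^{\pi'}_m$. Hence if the $\pi'$ removal is at a position $\ge m$ with a positive entry, the $\pi$ removal at a position $i\ge i'$ also has a positive entry, since $\bm x^\pi$ is nondecreasing.
\item Tie-shifting only moves a removal to a higher index, which can only decrease further tail sums of the $\pi$-system.
\end{itemize}
If these checks fail for some configuration, we would fall back on a slightly weaker comparison: only the top sum $S_1$ and the maximum $X_k$ need to be compared, and these may be controlled through the total-work sum $\sum_i X_i$ together with $X_k$. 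Once the dominance $Z^\pi\le_{\text{st}}Z^{\pi'}$ holds pathwise, positive recurrence (a finite expected return time to the empty state) of the $\pi'$-system implies the same for the $\pi$-system. Indeed, whenever $X^{\pi'}_k=0$ we have $\bm X^\pi=\bm 0$, so each return of the $\pi'$-system to empty forces a return of the $\pi$-system to empty. This establishes $\mathcal S^{(n,k)}(\pi')\subseteq\mathcal S^{(n,k)}(\pi)$.
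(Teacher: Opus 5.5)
Your architecture is the paper's: a common arrival stream, and a common unit-rate stream of (possibly dummy) completions whose shared uniform mark forces the $\pi$-system to serve an ordered position $i\ge i'$. You then run an induction preserving the GSC order of the ordered queue vectors, and conclude with $Z^\pi\le Z^{\pi'}$ and regeneration at the empty state. The gap is in the one step that carries the weight: GSC preservation at a service event (the paper's Lemma~\ref{lem:GSCpreserve}). Your step (c) gets the reordering backwards. For nondecreasing $\bm x$ with $x_i>0$, $\mathcal R((\bm x-\bm e_i)^+)=\bm x-\bm e_{j}$ with $j=\min\{r\le i: x_r=x_i\}$, the \emph{first} index tied with $x_i$, so the effective removal index moves \emph{down}, $j\le i$. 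For example, $\bm x=(1,2,2)$ and $i=3$ give $(1,1,2)=\bm x-\bm e_2$, whereas removing at the last tied index does not even give an ordered vector. Hence your second bullet is false, and it dismisses exactly the dangerous configuration. Write $j$ and $j'$ for the effective removal indices of the $\pi$- and $\pi'$-systems. Ties can pull the $\pi$-removal below $m$ while the $\pi'$-removal stays at or above $m$, i.e., $j<m\le j'\le i'\le i$. Then $S_m$ drops in the $\pi'$-system but not in the $\pi$-system, and nothing in your proposal excludes that $S_m$ was tight.

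Ruling this out is the core of the paper's argument (its Case~2): one must show $S_m(\bm x^{\pi'})-S_m(\bm x^{\pi})\ge1$ for every $m\in\llbracket j+1,j'\rrbracket$. At $m=j'$, tightness would force $x^\pi_{i}=x^\pi_{j'}\ge x^{\pi'}_{j'}$. Since $x^\pi_r=x^\pi_i$ on $\llbracket j,i\rrbracket$ while $x^{\pi'}_r<x^{\pi'}_{j'}$ for $r<j'$, the tail-sum difference at $j$ would then be negative, a contradiction. For $m<j'$, one takes the largest tight index $r_0$ and shows the difference at $r_0-1$ would be at most $-1$.

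Your first bullet also contains a non sequitur: $x^\pi_m\ge x^{\pi'}_m$ gives nothing when $x^{\pi'}_m=0$ but $x^{\pi'}_{j'}>0$. The fix is to note that $x^\pi_r=0$ for $r\le i$ and sum over $\llbracket m,j'\rrbracket$, which gives a gap of at least $x^{\pi'}_{j'}\ge1$.

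Finally, the fallback cannot work, because comparing only $S_1$ and $X_k$ is not an inductive invariant. For $k=3$, take $\bm x^\pi=(0,2,2)$ and $\bm x^{\pi'}=(1,1,2)$, which have equal totals and equal maxima. Serving position $3$ in both yields $(0,1,2)$ and $(1,1,1)$, which violates the maximum comparison. The full family of tail-sum inequalities is what closes the induction.
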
     
The proof of Lemma \ref{lem:stochasticOrder} appears in \S \ref{app:order}.

Lemma \ref{lem:stochasticOrder} is the key to our stability analysis. It allows us to establish the stability of an $(n,k)$ system by bounding, in the GSC order, the sample paths of its induced $(k,k)$ system by those of a simpler $(k,k)$ system that operates under a constant capacity-allocation process. The stability of the original $(n,k)$ system then follows from Lemma \ref{LemStable}. %

\subsection{Proof of Theorem \ref{thm:maximal-stability}} \label{subsection:maximalStability}

We are now ready to prove Theorem \ref{thm:maximal-stability}. In applying Lemma \ref{lem:stochasticOrder} we will compare the sample path of the induced $(k,k)$ system to that of a $(k,k)$ system operating under the control that gives all the service capacity to the longest queue at any time, denoted by $\pi_{\text{pool}}$, so that
$
\bm Y^{\pi_{\text{pool}}} = \bm e_k \eta.
$

\begin{lemma}\label{lemma:kk}
	$\mathcal S^{(k,k)}(\pi_{\text{pool}}) = [0,1)$. 
\end{lemma}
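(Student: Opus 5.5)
In a $(k,k)$ system under $\pi_{\text{pool}}$, all capacity (rate $\mu^*=1$) is always given to the longest queue; with the ordered representation, $\bm Y=\bm e_k\eta$. By Theorem \ref{thm:induce}, the transitions are $\bm x\to\bm x+\bm e$ at rate $\lambda$ and $\bm x\to\mathcal R((\bm x-\bm e_k)^+)$ at rate $1$. The plan is to reduce stability to the workload of a single-server queue. Define the total number of tasks
\[
W(\bm x):=\sum_{i=1}^k x_i .
\]
Each arrival increases $W$ by $k$. Each service event at a nonempty state (when $x_k>0$, i.e.\ the system is nonempty) decreases $W$ by exactly $1$. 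Hence $W(\bm X(t))$ is itself a CTMC on $\Z_+$: it jumps up by $k$ at rate $\lambda$ and down by $1$ at rate $1$ whenever $W>0$. Equivalently, it is the queue-length process of an M/M/1-type queue with batch arrivals of size $k$. The key observation is that $\bm x=\bm 0$ iff $W=0$.

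For sufficiency, take $\rho=k\lambda<1$. The drift of $W$ at any nonzero state is $k\lambda-1<0$, so by the Foster--Lyapunov criterion with Lyapunov function $W$ (plus the finite exceptional set $\{0\}$), $W$ is positive recurrent. The return time of $\bm X$ to $\bm 0$ coincides with the return time of $W$ to $0$, so it has finite mean. Moreover, $\bm X$ is an irreducible CTMC on the reachable subset of $\mathcal R(\Z_+^k)$: every state reaches $\bm 0$ through service events, and $\bm 0$ reaches reachable states by construction. Therefore $\bm X$ is positive recurrent, i.e.\ $\rho\in\mathcal S^{(k,k)}(\pi_{\text{pool}})$.

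For necessity, suppose $\rho\ge1$. Then $W$ is a skip-free-downward random walk with nonnegative drift $k\lambda-1\ge0$ away from $0$. Such a walk is transient when the drift is positive and null recurrent when the drift is zero; this follows from standard M/G/1-type results, e.g.\ via the embedded jump chain being a random walk with increments $+k$ and $-1$ and nonnegative mean. In either case the expected return time to $0$ is infinite, so $\bm X$ is not positive recurrent, and $\rho\notin\mathcal S^{(k,k)}(\pi_{\text{pool}})$.

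The main point requiring care is verifying that the decrement of $W$ is exactly one at every service event when the system is nonempty. This holds because the longest queue $x_k$ is positive whenever $\bm x\neq\bm 0$, so $(\bm x-\bm e_k)^+=\bm x-\bm e_k$, and reordering preserves the sum. This identity is what makes $W$ autonomously Markov, after which the $\rho=1$ boundary case reduces to the classical null-recurrence fact for a zero-drift walk.
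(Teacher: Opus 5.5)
Your proof is correct, but it takes a different route from the paper. The paper relies on the specific structure of $\pi_{\text{pool}}$. Because arrivals do not change the ordering of the queues, full capacity cycles round-robin through the tasks of the oldest job. At the job level the system is therefore an FCFS $M/G/1$ queue with Erlang$(k,1)$ service times, and the paper quotes the classical condition $\lambda<1/k$, boundary case included.

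You work instead at the task level. Your $W=\sum_{i=1}^k x_i$ is exactly the number of exponential phases remaining in that same $M/E_k/1$ queue (the method of stages). You check that $W$ is autonomously Markov, with up-jumps of size $k$ at rate $\lambda$ and unit down-jumps at rate $1$ when $W>0$. You then prove both directions yourself: Foster--Lyapunov for $\rho<1$, and transience or null recurrence of a skip-free walk for $\rho\ge 1$ (Wald's identity makes the $\rho=1$ case a one-liner).

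The paper's route is shorter and gives a job-level description directly. Yours buys generality: the only property of $\pi_{\text{pool}}$ you use is that no capacity is placed on an empty queue while the system is nonempty. So the same argument shows that every such non-idling policy of a $(k,k)$ system has stability region $[0,1)$. It also applies to the induced process of Theorem~\ref{thm:induce} under any non-idling policy of an $(n,k)$ system. There $\sum_{i=1}^k x_i=kZ-\sum_{i=1}^n v_i$ counts the task completions still required, and each completion at a nonempty queue lowers it by exactly one.

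One small point: stability is defined via positive recurrence of $\bm Q$, not $\bm X$. Since $\bm Q=\bm 0$ if and only if $W=0$, your return-time argument applies to $\bm Q$ verbatim; alternatively, invoke Lemma~\ref{LemStable}.
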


\proof{Proof.}
Since the arrival of a job does not affect the ordering of the queue lengths, the total capacity is dynamically allocated to each of the stations on a round-robin basis; in particular, the oldest job in the system departs at the end of each cycle. It follows that the service time of each job is the sum of $k$ i.i.d.\ unit-rate exponential service times. Hence, the system evolves as an $M/G/1$ queue with arrival rate $\lambda$ and Erlang-distributed service times with shape $k$ and rate $1$, which is stable if and only if $\lambda < 1/k$, or equivalently, if and only if $\rho = k \lambda < 1$. \hfill \Halmos

\endproof

The next lemma establishes that $\rho < 1$ is a necessary condition for stability of $(k,k)$ systems. 
\begin{lemma} \label{lemma:maximalkk}
	$\bar{\mathcal S}^{(k,k)} = [0,1)$.
\end{lemma}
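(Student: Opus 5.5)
Lemma \ref{lemma:kk} already shows that $\pi_{\text{pool}}$ stabilizes the $(k,k)$ system whenever $\rho<1$. This gives $[0,1)\subseteq \bar{\mathcal S}^{(k,k)}$. What remains is the converse: for $\rho\ge 1$, the queue process of a $(k,k)$ system is not positive recurrent under any admissible policy $\pi\in\Pi^{(k,k)}$. I will prove this with a work-conservation (fluid/drift) argument based on the total queue content
$$W(t):=\sum_{i=1}^k Q_i^\pi(t).$$

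\textbf{Step 1: drift of $W$.} In a $(k,k)$ system there is no redundancy, so every task must be served to completion. From the transition rates in Proposition \ref{prop:q-transitions} with $n=k$:
\begin{itemize}
\item an arrival increases $W$ by exactly $k$, at rate $\lambda$;
\item each task completion decreases $W$ by at most $1$ (exactly $1$ if the served queue is nonempty);
\item when $\|\bm v\|_0=k-1$, the only queue with positive $v$-deficit is the longest, and it is a single queue, so the transition \eqref{eq:q-departure-long} also removes just one task.
\end{itemize}
The total completion rate is at most $\sum_i\gamma_i^\pi(\bm q)=1$. Hence, in every state and under every $\pi$, the generator applied to $f(\bm q)=\sum_i q_i$ satisfies
$$\mathcal L^\pi f(\bm q)\ \ge\ k\lambda-1=\rho-1\ \ge 0 .$$

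\textbf{Step 2: conclude non-positive-recurrence.} I will use the martingale
$$M(t)=W(t)-W(0)-\int_0^t \mathcal L^\pi f(\bm Q(s))\,ds .$$
Its jumps are bounded and jump rates are bounded by $\lambda+1$, so it is a genuine martingale. It follows that $\mathbb E W(t)\ge W(0)+(\rho-1)t$.

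\emph{Case $\rho>1$.} By the strong law for the arrival and service Poisson processes (realize departures by thinning a rate-$1$ Poisson process),
$$W(t)\ \ge\ kA(t)-D(t)\ \ge\ kA(t)-N_1(t), \qquad W(t)/t\to k\lambda-1>0 \text{ a.s.}$$
So $W\to\infty$ a.s. and the chain is transient.

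\emph{Case $\rho=1$.} This is the main obstacle, because the drift bound alone is only $\ge 0$. Suppose, for contradiction, that $\bm Q^\pi$ is positive recurrent with stationary distribution $p$. Then in stationarity the rate of increase of $\mathbb E W$ must vanish:
$$\sum_{\bm q} p(\bm q)\,\mathcal L^\pi f(\bm q)=0 .$$
This requires the fraction of capacity wasted on empty queues to be zero. To justify the identity I will apply it to truncated $f\wedge N$, or use a rate-balance (Little-type) argument: long-run task arrival rate $k\lambda=1$ equals long-run task completion rate, which equals capacity used on nonempty queues. However, the empty state $\bm 0$ is then visited with positive stationary probability $p(\bm 0)>0$, and there all capacity is wasted, since $\mathcal L^\pi f(\bm 0)=k\lambda=1>0=\rho-1$. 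This gives
$$\sum_{\bm q} p(\bm q)\,\mathcal L^\pi f(\bm q)\ \ge\ p(\bm 0)>0,$$
a contradiction.

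Concretely, I will phrase this as: under positive recurrence, ergodic averages give long-run completion rate $\le 1-p(\bm 0)<1=k\lambda$, while stability forces throughput to equal arrivals. Care is needed to justify that throughput equals $k\lambda$ under positive recurrence without finite-mean assumptions. The plan is to use regenerative cycles at $\bm 0$: over a cycle, tasks in equals tasks out. This avoids needing $\mathbb E W<\infty$ in stationarity.
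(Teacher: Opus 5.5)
Your proof is correct, but it takes a different route from the paper's. The paper gets the converse inclusion from its comparison machinery. Every $\bm y\in\Delta^k$ satisfies $\bm y\le_{\text{GSC}}\bm e_k$, so Lemma \ref{lem:stochasticOrder}(ii) with $\pi'=\pi_{\text{pool}}$ yields $\mathcal S^{(k,k)}(\pi)\subseteq\mathcal S^{(k,k)}(\pi_{\text{pool}})=[0,1)$. In effect, the number of jobs under any policy pathwise dominates that of the $M/E_k/1$ queue of Lemma \ref{lemma:kk}. The boundary case $\rho=1$ is then inherited from the classical instability of an $M/G/1$ queue at unit load.

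You instead argue by work conservation on $W=\sum_i Q_i$. Arrivals add exactly $k$ tasks, each completion removes at most one (for $n=k$ the sum in \eqref{eq:q-departure-long} has a single term), and total capacity is $1$. The SLLN then gives transience for $\rho>1$. For $\rho=1$, the regenerative cycle at $\bm 0$ gives a contradiction: expected task arrivals per cycle, $k\lambda\,\mathbb E\tau=\mathbb E\tau$, must equal expected completions. Those are at most $\mathbb E\tau-1/\lambda$, because all capacity is wasted during the initial $\mathrm{Exp}(\lambda)$ sojourn at $\bm 0$.

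Your route is more elementary and self-contained. It needs neither Lemma \ref{lem:GSCpreserve} nor $M/G/1$ theory, and it applies verbatim to the induced system of Theorem \ref{thm:induce}, where $\sum_i X_i$ also drops by at most one per completion. The paper's route yields the stronger pathwise fact that pooling all capacity at the longest queue is GSC-extremal. It also reuses the single comparison lemma that drives Theorems \ref{thm:sufficient} and \ref{thm:maximalControl}.

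Two small points. First, your description of the case $\|\bm v\|_0=k-1$ is inverted: there the longest queue is the unique one with $v_i=0$, although the conclusion you draw is right. Second, $p(\bm 0)>0$ uses that stability means $\bm 0$ is positive recurrent, as in the paper's regenerative definition, since $\bm Q^\pi$ need not be irreducible. So, as you already planned, the cycle-based version of your $\rho=1$ argument is the one to write down, not the unqualified identity $\sum_{\bm q}p(\bm q)\mathcal L^\pi f(\bm q)=0$.
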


\proof{Proof.}
By Lemma \ref{lemma:kk}, there exists an admissible policy for which $[0,1)$ is the stability region. On the other hand, 
$$\bm y \le_{\text{GSC}} \bm e_k, \quad \text{for any } \bm y \in \Delta^k.$$ In turn, a $(k,k)$ system under any admissible policy $\pi$ must have an ordered capacity-allocation process $\bm Y^\pi$ with  
$\bm Y^\pi \le_{\text{st,GSC}} \bm e_k \eta$.
 By taking $n=k$, $\bm b = \bm e_k$, and $\pi^\prime = \pi_{\text{pool}}$ in Lemma \ref{lem:stochasticOrder} (when applied to the special case of $n=k$), we have $\mathcal S^{(k,k)} (\pi) \subseteq \mathcal S^{(k,k)} (\pi_{\text{pool}})$. Hence, $\bar{\mathcal S}^{(k,k)} = [0,1)$.  \hfill \Halmos

\endproof

\proof{Proof of Theorem \ref{thm:maximal-stability}.}

Consider an $(n,k)$ system under an arbitrary policy $\pi \in \Pi^{(n,k)}$, with induced processes $\bm X^\pi = \bm \phi(\bm Q^\pi)$ and $\bm Y^\pi = \bm \psi^\pi (\bm Q^\pi)$. 
By Lemma \ref{lemma:maximalkk}, the induced system is not stable if $\rho \ge 1$, and therefore the original $(n,k)$ system is unstable as well by virtue of Lemma \ref{LemStable}. Hence $\bar{\mathcal S}^{(n,k)} \subseteq [0,1)$.

Conversely, for any policy $\pi_k \in \Pi^{(k,k)}$ applied to a $(k,k)$ system, there exists a policy $\pi_n \in \Pi^{(n,k)}$ under which the two systems are equal in distribution. In particular, for such a $\pi_k$ we take $\pi_n$ to be the policy that allocates no capacity to stations $k+1, \ldots, n$, and uses the same allocation rule as $\pi_k$ to allocate capacity to stations $1,\dots, k$. It follows that the maximal stability region of $(k,k)$ systems is no larger than that of $(n,k)$ systems. Hence, $\bar{\mathcal S}^{(k,k)} \subseteq \bar{\mathcal S}^{(n,k)}$, so that $[0,1) \subseteq \bar{\mathcal S}^{(n,k)}$ by Lemma \ref{LemStable}.

We conclude that $\bar{\mathcal S}^{(n,k)} = [0,1)$, as stated. \hfill \Halmos

\endproof

\section{Maximal Static Policies} \label{section:static}

We now consider the maximal-design problem for $(n,k)$ systems under static policies in $\hat{\Pi}^{(n,k)}$. Specifically, we show that the maximal stability region can be achieved by a static policy, and characterize maximal static policies.
To this end, recall that for $\bm d\in \Delta^n$, the static policy $\pi_{\bm d}$ satisfies $\bm \gamma^{\pi_{\bm d}} (\bm q) \equiv \bm d$ for every $\bm q\in \mathbb S^{(n,k)}$; that is, the server at station $i$ processes work at the constant rate $d_i$ throughout. Let
\begin{equation*} 
    d_{\min} := \min_{i\in \llbracket 1, n \rrbracket } d_i.
\end{equation*}
The next lemma characterizes the stability region of a $(k,k)$ system operating under the static policy $\pi_{\bm d}$.
\begin{lemma} \label{lemma:kkstatic}
		$\mathcal S^{(k,k)} (\pi_{\bm d}) = [0, k d_{\min} )$. %
\end{lemma}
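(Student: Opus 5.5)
We must show that a $(k,k)$ system under the static policy $\pi_{\bm d}$, $\bm d\in\Delta^k$, is positive recurrent if and only if $\rho=k\lambda<kd_{\min}$, i.e., if and only if $\lambda<d_{\min}$. The plan is to prove necessity by a one-dimensional comparison and sufficiency by a Lyapunov (Foster) argument on the CTMC $\bm Q$ of Proposition \ref{prop:q-transitions} with $n=k$.

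\textbf{Necessity.} Let $i^*$ be a station with $d_{i^*}=d_{\min}$. Every job contributes a task to station $i^*$. With $n=k$ there is no redundancy, so that task must be processed to completion before the job can depart. Removal of redundant tasks via \eqref{eq:q-departure-long} only affects stations whose task of the oldest job has already been served, and for $n=k$ the set $\{\sigma_k,\dots,\sigma_n\}$ is a single index. Hence $Q_{i^*}$ is exactly an $M/M/1$ queue with arrival rate $\lambda$ and service rate $d_{\min}$. Moreover $Z=\max_i Q_i\ge Q_{i^*}$. If $\lambda\ge d_{\min}$, then $Q_{i^*}$ is not positive recurrent, so the empty state has infinite expected return time. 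Therefore $\bm Q$ is not positive recurrent, and $\mathcal S^{(k,k)}(\pi_{\bm d})\subseteq[0,kd_{\min})$.

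\textbf{Sufficiency.} Now assume $\lambda<d_{\min}$. Each marginal $Q_i$ is an $M/M/1$ queue with rates $\lambda<d_i$, driven by the common Poisson arrival process and by independent service Poisson processes. The join mechanism never alters $Q_i$ in the $(k,k)$ case: a task leaves station $i$ only upon its own service, and the join merely empties join-queues.

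I would take the Lyapunov function $f(\bm q)=\sum_i q_i$, or alternatively $\max_i q_i$. For $f=\sum_i q_i$, the drift is
\[
k\lambda-\sum_{i:q_i>0}d_i .
\]
This is negative only when all queues are nonempty, since we need $\lambda<d_{\min}$ to give $\sum_{i\in A}d_i>|A|\lambda$. It fails when some queues are empty, so a refinement is needed. The standard remedy is a product-form, regeneration-style argument: the joint process $\bm Q$ is a $k$-dimensional random walk in the orthant, but positive recurrence follows more simply because the empty state $\bm 0$ is hit whenever all $k$ coupled $M/M/1$ queues are simultaneously empty. I would use $f(\bm q)=\sum_i c^{q_i}$ with $c>1$ chosen so that
\[
\lambda(c-1)+d_i(c^{-1}-1)<0 \quad \text{for all } i,
\]
which is possible exactly because $\lambda<d_{\min}$.

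The drift of $f$ then equals
\[
\sum_{i:q_i>0}c^{q_i}\bigl[\lambda(c-1)+d_i(c^{-1}-1)\bigr]+\sum_{i:q_i=0}\lambda(c-1).
\]
This is at most $-\epsilon f(\bm q)+k\lambda(c-1)+k\epsilon$, which is negative outside a finite set. Foster's criterion then gives positive recurrence, and therefore $[0,kd_{\min})\subseteq\mathcal S^{(k,k)}(\pi_{\bm d})$.

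\textbf{Main obstacle.} The delicate step is the sufficiency drift computation: the linear function fails near the boundary, so the exponential Lyapunov function must be chosen so that boundary terms are bounded constants. I would also double-check that the state space $\mathbb S^{(k,k)}=\Z_+^k$ is irreducible so that Foster's criterion applies.
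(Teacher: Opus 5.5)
Your proof is correct. Its structural core is the paper's: with $n=k$ the block transition \eqref{eq:q-departure-long} reduces to an ordinary service completion at the single station $\sigma_k$. Hence each $Q_i$ is a marginal $M/M/1$ queue with rates $\lambda$ and $d_i$, and all of them are driven by one Poisson arrival stream. Your necessity argument is then the same as the paper's.

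One wording slip: you say that removal ``only affects stations whose task of the oldest job has already been served.'' That is backwards. The affected stations $\sigma_k,\ldots,\sigma_n$ are those whose oldest task is still unserved. Since that set is a singleton here, the conclusion is unaffected.

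The real difference is in sufficiency. The paper simply asserts that the system is stable if each marginal queue is. You instead verify joint positive recurrence with the test function $f(\bm q)=\sum_i c^{q_i}$, $1<c<d_{\min}/\lambda$. Your drift computation is correct. Bounded jump rates and irreducibility on $\Z_+^k$ (for $\lambda>0$, $d_{\min}>0$) make Foster's criterion applicable. This buys an explicit proof of exactly the direction where dependence between the queues matters.

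A route closer to the paper's one-liner would go through tightness. Each positive recurrent marginal gives a tight family $\{Q_i(t)\}_{t\ge0}$, so $\{\bm Q(t)\}_{t\ge0}$ is tight. An irreducible CTMC with tight time-$t$ laws cannot be transient or null recurrent.

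Finally, drop the remark about a ``product-form, regeneration-style'' argument. The common arrival stream correlates the queues, so the stationary law is not product form; already for $k=2$ this is the hard two-parallel-queues problem. Also, noting that $\bm 0$ is hit when all queues are empty says nothing about a finite mean return time. The Lyapunov function is what does the work.
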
 

\proof{Proof.}
Under $\pi_{\bm d}$, a $(k,k)$ system is simply $k$  (nonhomogeneous) parallel $M/M/1$ queues fed by the same Poisson arrival process. Hence, the system is stable if and only if each of these $M/M/1$ queues is stable, namely, if and only if $\lambda < d_i$ for all $i\in \llbracket 1, k \rrbracket$, or equivalently, if and only if $\rho = k\la < k d_{\min}$.\hfill \Halmos
\endproof

Next, consider the homogeneous capacity-allocation vector 
$
	\bm{d}^k := (1/k, 1/k, \ldots, 1/k) \in \Delta^k.
$
Since $d^k_{\min} = 1/k$, Lemma \ref{lemma:kkstatic} yields
	$\mathcal S^{(k,k)} (\pi_{\bm{d}^k}) = [0,1)$.
Thus $\pi_{\bm{d}^k}$ is a maximal static policy for the $(k,k)$ system. The same dynamics are obtained in an $(n,k)$ system under the static policy $\pi_{\tilde{\bm{d}}^k} \in \hat{\Pi}^{(n,k)}$, where $\tilde{\bm{d}}^k \in \Delta^n$ has components
\begin{equation} \label{eq:d^k}
	 \tilde{d}^k_i = 1/k,  \qforq i\in \llbracket 1, k \rrbracket  \quad \text{and} \quad \tilde{d}^k_i = 0, \qforq i\in \llbracket k+1, n \rrbracket.
\end{equation}

\begin{corollary}
	$\mathcal S^{(n,k)} (\pi_{\tilde{\bm{d}}^k} ) = [0,1)$. Thus, $\pi_{\tilde{\bm{d}}^k}$ is a maximal static policy. 
\end{corollary}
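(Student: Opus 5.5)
The plan is to couple the $(n,k)$ system under $\pi_{\tilde{\bm d}^k}$ with a $(k,k)$ system under $\pi_{\bm d^k}$, and then combine Lemma \ref{lemma:kkstatic} with Theorem \ref{thm:maximal-stability}. Under $\tilde{\bm d}^k$, stations $k+1,\ldots,n$ receive zero capacity. Hence the queue length at each such station can only increase through arrivals, and can only decrease when a job departs and its redundant tasks are removed. The first thing to check, from the transitions in Proposition \ref{prop:q-transitions}, is that these stations can never complete a task. A job therefore departs exactly when its tasks at stations $1,\ldots,k$ have all been completed, because no other $k$ stations can supply completions.

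From this, $Q_j(t)=Z(t)=\max_{i\le k}Q_i(t)$ for every $j>k$, starting from the empty state. The argument is by induction over transitions:
\begin{itemize}
\item an arrival raises every coordinate by one, which preserves the identity;
\item a completion at station $i\le k$ that triggers a departure lowers all stations not holding a completed task of the oldest job, and these include every station $j>k$;
\item a completion that triggers no departure leaves $Z$ unchanged.
\end{itemize}

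Consequently the subvector $(Q_1,\ldots,Q_k)$ is itself a CTMC. Its transitions are an increase by $\bm e$ at rate $\lambda$, and a decrease of coordinate $i$ at rate $1/k$ when $Q_i>0$. These are exactly the dynamics of the $(k,k)$ system under $\pi_{\bm d^k}$, that is, $k$ parallel $M/M/1$ queues with rate $1/k$ fed by a common Poisson stream. The full state $\bm Q$ is a deterministic function of this subvector, since the remaining coordinates equal its maximum. Positive recurrence of the two chains is therefore equivalent, and $\mathcal S^{(n,k)}(\pi_{\tilde{\bm d}^k})=\mathcal S^{(k,k)}(\pi_{\bm d^k})$.

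An equivalent route goes through the induced system. Here $\psi^\pi_k(\bm q)$ collects the capacity of the $n-k+1$ longest queues. Because stations $j>k$ are always among the longest (with the tie-breaking by index placing stations $1,\ldots,k$ first among ties, which needs a check), this gives the induced action $(1/k,\ldots,1/k)$. Lemma \ref{LemStable} then transfers stability. I would use the direct argument, since it avoids the tie-breaking subtlety.

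Finally, $d^k_{\min}=1/k$, so Lemma \ref{lemma:kkstatic} gives $\mathcal S^{(k,k)}(\pi_{\bm d^k})=[0,1)$. This equals $\bar{\mathcal S}^{(n,k)}$ by Theorem \ref{thm:maximal-stability}, so $\pi_{\tilde{\bm d}^k}$ is maximal. It is static by construction. The only delicate step is verifying rigorously that zero-capacity stations never complete tasks, that they always equal $Z$, and that join events coincide with the completion of all $k$ served tasks. This is routine but requires care with the join rule.
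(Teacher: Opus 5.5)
Your proposal is correct and follows the paper's own argument. The paper simply asserts that under $\tilde{\bm d}^k$ the $(n,k)$ system has the same dynamics as the $(k,k)$ system under $\pi_{\bm d^k}$, then applies Lemma \ref{lemma:kkstatic} and uses Theorem \ref{thm:maximal-stability} for maximality; your checks (zero-capacity stations never complete tasks, always equal $Z$, and departures occur exactly when stations $1,\ldots,k$ have all finished) supply the details the paper leaves implicit.
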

We conclude that the maximal stability region $[0,1)$ is achievable by at least one static policy in $(n,k)$ systems. 

\paragraph{\textbf{Non-robustness of maximality without redundancy.}}
Observe that in a $(k,k)$ system, $\pi_{\bm d^k}$ is the unique maximal static policy. Indeed, for any $\bm d \in \Delta^k$ with $\bm d\neq \bm d^k$, we have $k d_{\min} < 1$, so the stability region is strictly smaller than $[0,1)$. Similarly, if we consider static policies $\pi_{\bm d}$ of the form $d_{k+1} = d_{k+2} = \cdots = d_n = 0$ in an $(n,k)$ system (under which the $(n,k)$ system is effectively a $(k,k)$ system), any perturbation in the component values of $\tilde{\bm d}^k$ given in \eqref{eq:d^k} strictly reduces the stability region.
We next show that static policies are substantially more robust to perturbations in the service capacities when redundancy is introduced. 

\subsection{Maximal-Design Criterion} \label{subsection:sufficient}

Given $\bm d\in \Delta^n$ and a static policy $\pi_{\bm d} \in \hat \Pi^{(n,k)}$, the induced $(k,k)$ system has the ordered capacity-allocation process $\bm Y^{\pi_{\bm d}} = \bm \psi^{\pi_{\bm d}} (\bm Q^{\pi_{\bm d}})$. %
Note that the capacity-allocation process $\bm \Gamma^{\pi_{\bm d}}$ in the original $(n,k)$ system is constant under a static policy. By contrast,  $\bm Y^{\pi_{\bm d}}$ need not be a constant process, because the index $\sigma_i(\bm Q(t))$ of the $i$th shortest queue depends on the state, and therefore keeps changing. In particular, the policy in the induced $(k,k)$ system of an $(n,k)$ system operating under a static policy is not static.

For $\bm d\in \Delta^n$, let $d_M := \max_{i\in \llbracket 1, n \rrbracket } d_i$. %

\begin{theorem}[Criterion for maximal design] \label{thm:sufficient}
If $d_M \le 1/k$, then $\pi_{\bm d}$ is maximal.
\end{theorem}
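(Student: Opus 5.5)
By Theorem \ref{thm:maximal-stability}, $\mathcal S^{(n,k)}(\pi_{\bm d}) \subseteq [0,1)$ for every $\bm d$. So it suffices to prove the reverse inclusion $[0,1)\subseteq \mathcal S^{(n,k)}(\pi_{\bm d})$ whenever $d_M\le 1/k$.

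The plan is to apply Lemma \ref{lem:stochasticOrder}(i) with $\pi'$ equal to the $(n,k)$ policy $\pi_{\tilde{\bm d}^k}$ from \eqref{eq:d^k}. Its induced $(k,k)$ system has constant ordered allocation $\bm Y^{\pi'} = \bm d^k\eta$ with $\bm d^k=(1/k,\dots,1/k)\in\Delta^k$. The ordering of queue lengths among stations $1,\dots,k$ changes over time, but every station receives rate $1/k$, so the ordered vector is always $\bm d^k$. Stations $k+1,\dots,n$ receive zero capacity; they simply accumulate tasks until they are removed as redundant, which does not affect the induced state. By the Corollary, $\mathcal S^{(n,k)}(\pi')=[0,1)$. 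Hence Lemma \ref{lem:stochasticOrder}(i) gives the result as soon as we verify $\bm Y^{\pi_{\bm d}} \ge_{\text{st,GSC}} \bm d^k\eta$.

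This comparison can be checked pathwise and deterministically, so no nontrivial coupling is needed: we show that $\bm\psi^{\pi_{\bm d}}(\bm q)\ge_{\text{GSC}}\bm d^k$ for every state $\bm q\in\mathbb S^{(n,k)}$.
\begin{itemize}
\item Fix $m\in\llbracket 1,k\rrbracket$ and set $\bm y=\bm\psi^{\pi_{\bm d}}(\bm q)$. By the definition of $\bm\psi$,
\[
\sum_{i=m}^k y_i=\sum_{j=m}^n d_{\sigma_j} = 1-\sum_{j=1}^{m-1} d_{\sigma_j}.
\]
\item Each of the $m-1$ terms in the last sum is at most $d_M\le 1/k$. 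Therefore
\[
\sum_{i=m}^k y_i \ge 1-\frac{m-1}{k} = \frac{k-m+1}{k} = \sum_{i=m}^k d^k_i.
\]
\end{itemize}
This is exactly the GSC inequality for every $m$. Taking both processes on the same probability space (the $\pi_{\bm d}$ system together with a constant process) yields $\bm Y^{\pi_{\bm d}}\ge_{\text{st,GSC}}\bm d^k\eta$.

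The main point requiring care is the choice of reference policy $\pi'$ in Lemma \ref{lem:stochasticOrder}. The lemma requires $\bm Y^{\pi'}$ to be a constant process $\bm b\eta$ arising from some $(n,k)$ policy whose stability region is already known; confirming this for $\pi_{\tilde{\bm d}^k}$ is what the first paragraph does. Once that is settled, the GSC verification above is immediate, and combining it with Theorem \ref{thm:maximal-stability} gives $\mathcal S^{(n,k)}(\pi_{\bm d})=[0,1)$, i.e., $\pi_{\bm d}$ is maximal.
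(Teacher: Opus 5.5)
Your proof is correct and follows the paper's argument. The statewise bound $\sum_{i=m}^k y_i \ge (k-m+1)/k$ (equivalently $Y_i^{\pi_{\bm d}}\le d_M\le 1/k$ for $i\le k-1$) gives $\bm Y^{\pi_{\bm d}}\ge_{\text{st,GSC}}\bm d^k\eta$, and Lemma~\ref{lem:stochasticOrder}(i) with a reference system whose induced allocation is $\bm d^k\eta$, combined with Theorem~\ref{thm:maximal-stability}, then yields $\mathcal S^{(n,k)}(\pi_{\bm d})=[0,1)$; your explicit choice $\pi'=\pi_{\tilde{\bm d}^k}$ just makes concrete the reference policy the paper leaves unnamed, and the identity $\bm Y^{\pi'}\equiv\bm d^k$ holds because stations $k+1,\dots,n$ never complete tasks (so their join-queues stay empty and $Q_j=Z$), while the index tie-breaking in $\bm\sigma$ places them last.
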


\proof{Proof.}

For $\bm d\in \Delta^n$ with $d_M \le 1/k$, the induced process $\bm Y^{\pi_{\bm d}} = \bm \psi^{\pi_{\bm d}} (\bm Q^{\pi_{\bm d}})$ satisfies 
$\bm Y^{\pi_{\bm d}} \ge_{\text{st,GSC}} \bm d^k \eta$,
where $\bm d^k = (1/k, \ldots, 1/k)$.
Indeed, 
$$
Y_i^{\pi_{\bm d}}(t) = d_{\sigma_i\left(\bm Q^{\pi_{\bm d}}(t) \right)} \le d_M \le 1/k \quad \text{w.p.1, for all } i\in \llbracket 1, k-1 \rrbracket \text{ and } t \ge 0.
$$ 
Then $\mathcal S^{(n,k)} (\pi^{\prime}) \subseteq \mathcal S^{(n,k)} (\pi_{\bm d})$  by Lemma \ref{lem:stochasticOrder}, where the $(n,k)$ system under policy $\pi^{\prime}$ has the induced process $\bm Y^{\pi^\prime} = \bm d^k \eta$, i.e., its induced $(k,k)$ system has homogeneous servers, each processing work at rate $1/k$. By Lemma \ref{lemma:kkstatic}, the stability region of the latter system is $[0,1)$.
We thus have $\mathcal S^{(n,k)} (\pi^{\prime}) = [0,1) \subseteq \mathcal S^{(n,k)} (\pi_{\bm d})$, and the statement follows because $\mathcal S^{(n,k)} (\pi_{\bm d}) \subseteq \bar{\mathcal S}^{(n,k)} = [0,1)$. \hfill \Halmos
\endproof

The condition $d_M \le 1/k$, which we refer to as the \emph{maximal-design criterion}, guarantees that the $k-1$ shortest queues at any time $t \ge 0$ are each served at a rate no greater than $1/k$. Hence, the total service capacity allocated to the remaining $n-k+1$ queues is at least $1/k$. It follows that, in the induced $(k,k)$ system, each of the first $k-1$ queues is served at a rate no greater than $1/k$, while the longest queue is served at a rate no less than $1/k$. (Recall that the induced system does not operate under a static policy.)

\paragraph{\textbf{Robustness due to redundancy.}}
The value $d_M$ can be viewed as a measure of server heterogeneity. Its minimum value is $1/n$, corresponding to homogeneous servers, while its maximum value is $1$, in which case $n-1$ stations receive zero service capacity and the system is unstable for every arrival rate $\lambda>0$. Theorem \ref{thm:sufficient} shows that the maximal stability region $[0,1)$ is robust to a certain level of heterogeneity: a static policy is maximal as long as no server is allocated more than $1/k$ of the total service capacity, regardless of how that capacity is distributed among the servers. In particular, maximality holds whenever $d_M\in[1/n,1/k]$.

In a $(k,k)$ system without redundancy, the interval $[1/n,1/k]$ collapses to the singleton $\{1/k\}$, so that any perturbation of the service-capacity allocation strictly reduces the stability region. In many practical settings, however, such perturbations are unavoidable. For example, in server farms used for multi-agent LLM inference, imbalanced power allocation, inaccuracies in the power-to-frequency relationship, and fluctuations in processor speeds due to environmental conditions may all lead to deviations from an ideal allocation. The benefit of increased redundancy (i.e., increasing the number of servers $n$) is that it enlarges the interval $[1/n,1/k]$, thereby allowing a broader class of capacity allocations $\bm d \in \Delta^n$ to achieve the maximal stability region.

\section{Maximal Dynamic Policies} \label{section: dynamic}

We now consider the class of dynamic policies $\Pi^{(n,k)}$. In this setting, establishing the stability condition for a given policy can be difficult, because the queue process is multidimensional and the long-run job-completion rate is difficult to characterize, as the service capacity allocated to each server changes continuously over time.
Thus, our goal is to characterize a general \textit{maximal-control criterion} which guarantees that a dynamic policy is maximal. 

Recall that for $\bm q \in \mathbb S^{(n,k)}$, $\sigma_i := \sigma_i(\bm q)$ denotes the index of the $i$th shortest queue, $i\in \llbracket 1, n \rrbracket$.

\begin{theorem}[Criterion for maximal control] \label{thm:maximalControl}
	Consider an $(n,k)$ system operating under a policy $\pi \in \Pi^{(n,k)}$. If
	\begin{equation} \label{eq:maximalcontrol}
	 \sum_{i=1}^j	\gamma^\pi_{\sigma_i} (\bm q) \le \frac{j}{k} \quad \text{ for all } j\in \llbracket 1, k-1 \rrbracket \text{ and all } \bm q\in \mathbb S^{(n,k)},
	\end{equation}
	then
$
		\mathcal S^{(n,k)} (\pi) = [0,1).
$
	In particular, $\pi$ is maximal.
\end{theorem}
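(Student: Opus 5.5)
This follows the template of Theorem \ref{thm:sufficient}. The plan is to compare the induced $(k,k)$ system of the $(n,k)$ system under $\pi$ with the induced system of a policy $\pi'$ whose ordered capacity-allocation process is the constant $\bm Y^{\pi'} = \bm d^k\eta$, where $\bm d^k=(1/k,\ldots,1/k)\in\Delta^k$. Such a $\pi'$ exists: take the $(n,k)$ policy that gives rate $1/k$ to each of the $k-1$ currently shortest queues and rate $1/k$ in total to the remaining $n-k+1$ queues (e.g.\ split evenly). Its $\bm\psi^{\pi'}$ is then identically $\bm d^k$, so it is admissible and its induced system is the $(k,k)$ system with homogeneous servers at rate $1/k$. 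Since the induced transitions in Theorem \ref{thm:induce} depend only on $(\bm x,\bm y)$, that induced system is a $(k,k)$ system under $\pi_{\bm d^k}$. By Lemma \ref{lemma:kkstatic} and Lemma \ref{LemStable}, $\mathcal S^{(n,k)}(\pi')=[0,1)$.

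The key step is to verify $\bm Y^\pi \ge_{\text{st,GSC}} \bm d^k\eta$. We expect this to hold deterministically, pathwise, so no coupling is required. For every $\bm q\in\mathbb S^{(n,k)}$ and every $m\in\llbracket 1,k\rrbracket$ we need
\[
\sum_{i=m}^k \psi^\pi_i(\bm q)\ \ge\ \frac{k-m+1}{k}.
\]
Because $\bm\psi^\pi(\bm q)\in\Delta^k$, the left side equals $1-\sum_{i=1}^{m-1}\psi^\pi_i(\bm q)=1-\sum_{i=1}^{m-1}\gamma^\pi_{\sigma_i}(\bm q)$. For $m=1$ the inequality is an equality. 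For $m\in\llbracket 2,k\rrbracket$, set $j=m-1\in\llbracket1,k-1\rrbracket$; then condition \eqref{eq:maximalcontrol} gives $\sum_{i=1}^{j}\gamma^\pi_{\sigma_i}(\bm q)\le j/k$. Hence the left side is at least $1-(m-1)/k=(k-m+1)/k$, as required. Substituting $\bm q=\bm Q^\pi(t)$ gives $\bm Y^\pi(t)\ge_{\text{GSC}}\bm d^k$ for all $t$ with probability one, which is the required stochastic order.

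Lemma \ref{lem:stochasticOrder}(i), applied with $\bm b=\bm d^k$, then yields $[0,1)=\mathcal S^{(n,k)}(\pi')\subseteq\mathcal S^{(n,k)}(\pi)$. Theorem \ref{thm:maximal-stability} gives the reverse inclusion $\mathcal S^{(n,k)}(\pi)\subseteq\bar{\mathcal S}^{(n,k)}=[0,1)$, so $\mathcal S^{(n,k)}(\pi)=[0,1)$ and $\pi$ is maximal. The deep content is carried by Lemma \ref{lem:stochasticOrder}, so the only step needing care is exhibiting a genuine $(n,k)$ policy $\pi'$ with constant induced allocation $\bm d^k$. This works because $\bm\psi^{\pi'}$ depends only on ranks, and tie-breaking in $\bm\sigma$ is deterministic, so $\pi'$ is a well-defined state-feedback map $\bm\gamma^{\pi'}:\mathbb S^{(n,k)}\to\Delta^n$.
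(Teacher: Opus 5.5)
Your proposal is correct and follows the paper's proof. The paper rewrites condition \eqref{eq:maximalcontrol}, using $\bm Y^\pi\in\Delta^k$, as the tail-sum bounds $\sum_{i=j+1}^k Y_i^\pi(t)\ge (k-j)/k$, concludes $\bm Y^\pi\ge_{\text{st,GSC}}\bm d^k\eta$, and then finishes as in the proof of Theorem \ref{thm:sufficient} via Lemma \ref{lem:stochasticOrder}(i), Lemma \ref{lemma:kkstatic}, and Theorem \ref{thm:maximal-stability}; your explicit construction of a comparison policy $\pi'$ with $\bm\psi^{\pi'}\equiv\bm d^k$ only makes precise a step the paper leaves implicit.
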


\proof{Proof.}
The process $\bm Y^\pi = \bm \psi^\pi(\bm Q^\pi)$ in the induced system satisfies
\begin{equation*}
	\sum_{i=1}^j	Y_i^\pi(t) = \sum_{i=1}^j \gamma^\pi_{\sigma_i} (\bm Q^\pi(t)) \le \frac{j}{k}, \quad \text{for any } j\in \llbracket 1, k-1 \rrbracket \text{ and } t\ge 0.
\end{equation*}
Since $\bm Y^\pi$ takes values in $\Delta^k$, the above is equivalent to
\begin{equation*}
	\sum_{i=j+1}^k Y_i^\pi(t) \ge \frac{k-j}{k}, \quad \text{for any } j \in \llbracket 1, k-1 \rrbracket \text{ and any } t\ge0,
\end{equation*}
so that
\begin{equation*}
\bm Y^\pi \ge_{\text{st,GSC}} \bm d^k \eta, \text{ for }\bm d^k = (1/k, 1/k, \ldots, 1/k)\in \Delta^k.
\end{equation*}
From here, similar arguments to those in the proof of Theorem \ref{thm:sufficient} give that
$\mathcal S^{(n,k)} (\pi) = [0,1)$. \hfill \Halmos 
\endproof

The criterion for maximality of dynamic policies is weaker than the criterion for maximal design. Indeed, the maximal design criterion $d_M\le 1/k$ implies that the capacity-allocation process $\bm \Gamma$ satisfies
\begin{equation} \label{eq:staticProcess}
	\Gamma_j(t) \le \frac{1}{k}, \text{ for each } j\in \llbracket 1, n \rrbracket \text{ and any }t\ge 0.
\end{equation}
The maximal-control criterion given in Theorem \ref{thm:maximalControl} requires only that the cumulative capacity allocated to the first $j$ shortest queues is no more than $j/k$ for each $j\in \llbracket 1, k-1 \rrbracket$. In particular, the allocation to the remaining $n-(k-1)$ queues can be arbitrary. Thus, the capacity-allocation process satisfies
 \begin{equation} \label{eq:dynamicProcess}
 	\sum_{i=1}^j \Gamma_{\sigma_i}(t) \le \frac{j}{k}, \text{ for each } j\in \llbracket 1, k-1 \rrbracket \text{ and any }t\ge 0. %
 \end{equation}
 Clearly, \eqref{eq:dynamicProcess} holds if \eqref{eq:staticProcess} holds, but not vice versa.

\paragraph{\textbf{Implications for systems with flexible and collaborative servers.}}
Consider the $(n,k)$-FC system in Example \ref{exFlexibleServers}, with $m \ge 1$ servers having service rates $\bm \mu = (\mu_1, \dots, \mu_m)$ such that $\sum_{j=1}^m \mu_j = 1$. In this setting, the total service capacity consists of the $m$ components of $\bm \mu$ and therefore cannot be divided arbitrarily, with the capacity-allocation process $\bm \Gamma$ taking values in $\Delta^n_{\bm \mu}$ defined in \eqref{eqDelta_u}. If $\mu_r > 1/k$ for some $r\in \llbracket 1,m\rrbracket$, then the maximal-design criterion $d_M \le 1/k$ cannot hold for any $\bm d\in \Delta^n_{\bm \mu}$. In contrast, a dynamic policy can allocate the components of $\bm \mu$ so that the maximal-control criterion for dynamic policies holds, by assigning no service capacity to one or more of the $k-1$ shortest queues whenever necessary. Thus, unlike static policies, there exist dynamic policies that are guaranteed to be maximal for any $\bm \mu$.

\section{Summary and Future Research}\label{section:summary}

We studied the stability problem of $(n,k)$ systems under both static and dynamic capacity-allocation policies. In particular, we identified the nominal traffic intensity, characterized the maximal stability region, and established conditions under which both static and dynamic policies attain that region.

We first showed that the maximal stability region (among all policies) is attainable by static policies. We then characterized a maximal-design criterion, under which the static policy is guaranteed to be maximal. Specifically, we proved that if the largest allocated capacity $d_M$ among all servers satisfies $d_M \le 1/k$, then such a static policy is maximal. This criterion shows that redundancy makes maximality more robust to the heterogeneity of service capacities.

For dynamic policies, we derived a maximal-control criterion requiring that, at every state, the cumulative capacity allocated to the servers with the $j$ shortest queues be no more than $j/k$ for each $j=1,\ldots,k-1$; the remaining capacity allocation for the remaining $n-k+1$ servers can be arbitrary. Compared to the maximal design criterion for static policies, the criterion for maximal dynamic policies is substantially relaxed, so a much broader set of policies can be maximal.

A key challenge in the analysis of an $(n,k)$ system is the complexity of its state space stemming from the redundancy mechanism. To overcome this difficulty, we projected the $(n,k)$ system onto a lower-dimensional $(k,k)$ system, which we termed the induced system. We then established a sample-path comparison result based on a generalized Schur-convex (GSC) order for coupled $(k,k)$ systems. The resulting GSC sample-path stochastic-order comparisons enabled us to establish the stability results for the induced system and, in turn, for the original $(n,k)$ system. 

\paragraph{Significance of Our Results for Future Research.}
Given our characterization of the maximal policies, a natural direction for future research is to determine which maximal policies outperform others with respect to fundamental performance measures, such as the steady-state mean sojourn time and mean waiting time. In particular, our work here is a necessary first step toward the analysis, design, and control of FJR systems, because such performance measures should be optimized only among maximal policies. The reason is that these measures grow highly nonlinearly as the traffic intensity approaches the boundary of the stability region under a given policy. Consequently, an arrival rate that places a system in heavy traffic, or even renders it unstable, under a non-maximal policy may correspond to a system operating well within its stability region under a maximal policy. Optimizing performance measures over all admissible policies therefore risks selecting policies that are fundamentally constrained by an unnecessarily small stability region.

\section{Remaining Proofs}\label{app:proofs}

This section contains the remaining proofs, in addition to auxiliary results to support those proofs.

\subsection{Proof of Proposition~\ref{prop:q-transitions}}\label{app:q-transitions}

\proof{Proof.}
The transition due to an arrival in \eqref{eq:q-arrival} is immediate. 

The transition in \eqref{eq:q-departure-nojoin} follows because, when $\|\bm v\|_0 < k-1$, each of the jobs in the system has less than $k-1$ completed tasks. Therefore, a completion of a task at queue $i$ decreases $q_i$ by $1$, without triggering the removal of its sibling tasks..

Now consider the case $\|\bm v\|_0=k-1$.
For station $i$ with $q_i>0$ at time $t$, the join-queue has length $v_i$ and contains tasks $\{\tau_{1,i},\ldots,\tau_{v_i,i} \}$ with $\tau_{v_i+1,i}$ being the task in service. The sibling task $\tau_{v_i+1, \ell}$ of $\tau_{ v_i+1 ,i}$ at station $\ell$ %
is in join-queue $\ell$ if and only if $v_\ell \ge v_i+1$. Hence, the number of completed sibling tasks of task $\tau_{v_i+1, i}$ is
\begin{equation*} 
  c_i := \sum_{\ell\ne i} \mathbbm{1} \{v_\ell \ge v_i+1 \}.
\end{equation*}
Note that $c_i \le k-1$, and that the completion of task $\tau_{v_i+1, i}$  triggers a job departure and removal of its sibling tasks if and only if $c_i = k-1$.
It follows from \eqref{eq:join-empty-largest} and the equality $\|\bm v\|_0=k-1$,  that
\begin{equation} \label{eq:order_v}
	v_{\sigma_1} \ge \cdots \ge  v_{\sigma_{k-1}} > 0 = v_{\sigma_k}=\cdots=v_{\sigma_n}.
\end{equation}

Consider first a task completion at station $\sigma_i$ with $i\le k-1$. %
\begin{equation*}
	c_{\sigma_i} = \sum_{\ell\ne \sigma_i} \mathbbm{1} \{v_\ell \ge v_{\sigma_i}+1 \}  = \sum_{p < i} \mathbbm{1} \{v_{\sigma_p} \ge v_{\sigma_i}+1 \} \le \sum_{p < i} \mathbbm{1} \{v_{\sigma_p} \ge v_{\sigma_i} \} = i-1\le k-2.
\end{equation*}
Hence, a completion of a task at station $\sigma_i$ does not trigger a job departure, and the process transitions to $(\bm q-\bm e_{\sigma_i})^+$ at rate $\gamma^\pi_{\sigma_i} (\bm q)$, proving \eqref{eq:q-departure-short}. %

Finally, to prove \eqref{eq:q-departure-long}, consider the completion of a task at station $\sigma_r$ with $r\ge k$. %
By \eqref{eq:order_v}, we have $v_{\sigma_r} = 0$, and the task in service is $\tau_{v_{\sigma_r} + 1,\sigma_r} = \tau_{1,\sigma_r}$. The number of its completed sibling tasks is
\begin{equation*}
	c_{\sigma_r} = \sum_{\ell\ne \sigma_r} \mathbbm{1} \{v_\ell \ge v_{\sigma_r}+1 \} = \sum_{\ell\ne \sigma_r} \mathbbm{1} \{v_\ell \ge 1 \} = \sum_{\ell = 1}^n \mathbbm{1} \{v_\ell \ge 1 \} = \|\bm v\|_0 = k-1.
\end{equation*}
Then the completion of task $\tau_{1,\sigma_r}$ leads to the departure of job $1$, so that all this job's tasks $\{\tau_{1,\sigma_1}, \ldots, \tau_{1,\sigma_n}\}$ are removed from the system. Specifically, by \eqref{eq:order_v}, tasks $\{\tau_{1,\sigma_1},\ldots,\tau_{1,\sigma_{k-1}} \}$ are already in the join-queues of stations $\sigma_1, \ldots, \sigma_{k-1}$; tasks $\{\tau_{1,\sigma_k},\ldots,\tau_{1,\sigma_{n}} \}$ are in the queues of stations $\sigma_k, \ldots, \sigma_{n}$. Hence, the removal of the tasks of job $1$ decreases every queue length at stations $\sigma_k, \ldots, \sigma_{n}$ by one. It follows that the process transitions to state 
$
\left(\bm q-\sum_{j=k}^n \bm e_{\sigma_j}\right)^+
$
at rate $\gamma^\pi_{\sigma_r} (\bm q)$ for each $r\in \llbracket k, n \rrbracket$, and therefore at a total rate of $\sum_{r=k}^n \gamma^\pi_{\sigma_r} (\bm q)$. %
\hfill \Halmos
\endproof

\subsection{Proof of Theorem \ref{thm:induce}} \label{app:induce}

In the proof of Theorem \ref{thm:induce} we will use the following lemma, whose proof appears in \S \ref{app:projection-identities}. 

\begin{lemma}\label{lem:projection-identities}
The following hold for $\bm q\in \mathbb S^{(n,k)}$ and $\bm x := \bm \phi (\bm q)$.
\begin{align}
\bm \phi \bigl((\bm q-\bm e_{\sigma_r})^+\bigr)
&=
\mathcal R\bigl(( \bm x-\bm e_r)^+\bigr),
\qquad r\in \llbracket 1, k-1 \rrbracket, \label{lem:projection-short}\\
\bm \phi\bigl((\bm q - \bm e_{\sigma_j})^+\bigr)
&=
\mathcal R\bigl(( \bm x -\bm e_k)^+\bigr),
\qquad j\in \llbracket k, n \rrbracket, \label{lem:projection-long-single}
\end{align}
Moreover, if $\|\bm v (\bm q)\|_0 = k-1$, then
\begin{equation}\label{lem:projection-long-block}
\bm \phi\left(\left(\bm q-\sum_{j=k}^n \bm e_{\sigma_j}\right)^+\right)
=
\mathcal R\bigl(( \bm x - \bm e_k)^+\bigr).
\end{equation}
\end{lemma}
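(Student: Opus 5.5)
The three identities are purely combinatorial statements about sorted vectors, so I will prove them directly from the definitions of $\bm\phi$, $\mathcal R$ and $\mathbb S^{(n,k)}$. Write $\bm\sigma=\bm\sigma(\bm q)$, so that $x_i=q_{\sigma_i}$ for $i\le k$. By \eqref{StateSpace} and \eqref{eq:maxq}, $q_{\sigma_k}=\cdots=q_{\sigma_n}=z(\bm q)=x_k$. The basic tool is the following observation. The multiset of the $k$ smallest entries of a vector $\bm q'$ determines $\bm\phi(\bm q')$, because $\bm\phi(\bm q')$ is that multiset listed in nondecreasing order. Likewise $\mathcal R(\bm w)$ is the multiset of entries of $\bm w$ listed in nondecreasing order, and tie-breaking affects only indices, never values. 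So in each case it suffices to show that the $k$ smallest entries of the perturbed $\bm q'$ form the same multiset as the entries of the perturbed $\bm x$.

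For \eqref{lem:projection-short}, take $r\le k-1$ and set $\bm q'=(\bm q-\bm e_{\sigma_r})^+$. Only coordinate $\sigma_r$ changes, and its value $x_r$ becomes $(x_r-1)^+\le x_r$. The multiset of all entries of $\bm q'$ is therefore $\{x_1,\dots,x_k\}$ with $x_r$ replaced by $(x_r-1)^+$, together with the $n-k$ entries at $\sigma_{k+1},\dots,\sigma_n$, each equal to $x_k$. Every entry of $\{x_1,\dots,(x_r-1)^+,\dots,x_k\}$ is at most $x_k$, so the $k$ smallest entries of $\bm q'$ (counted with multiplicity) are exactly this multiset. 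That multiset is the entry set of $(\bm x-\bm e_r)^+$, which gives the identity.

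For \eqref{lem:projection-long-single}, take $j\ge k$. The entry at $\sigma_j$, equal to $x_k$, becomes $(x_k-1)^+$. The entries at $\sigma_i$ for $i\le k-1$ are $x_1,\dots,x_{k-1}\le x_k$, and the other $n-k$ entries equal $x_k$. The $k$ smallest entries are therefore $x_1,\dots,x_{k-1},(x_k-1)^+$, which is the multiset of $(\bm x-\bm e_k)^+$. This covers the boundary case $x_k=0$ as well, since then all entries are $0$.

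For \eqref{lem:projection-long-block}, assume $\|\bm v(\bm q)\|_0=k-1$. By \eqref{eq:order_v}, $x_i<x_k$ for all $i\le k-1$, and hence $x_i\le x_k-1$. In $\bm q'=(\bm q-\sum_{j\ge k}\bm e_{\sigma_j})^+$, all $n-k+1$ entries at $\sigma_k,\dots,\sigma_n$ become $x_k-1$, where $x_k\ge1$ because $x_k>x_1\ge0$. The $k$ smallest entries are therefore $x_1,\dots,x_{k-1}$ together with one copy of $x_k-1$, which is again the multiset of $(\bm x-\bm e_k)^+$. The main point to handle carefully is this strict inequality. Without it, some $x_i$ could equal $x_k$, and then the decremented block entries $x_k-1$ would displace that $x_i$ from the $k$ smallest, so the identity would fail. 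This is exactly why \eqref{lem:projection-long-block} requires the hypothesis $\|\bm v\|_0=k-1$, whereas \eqref{lem:projection-short} and \eqref{lem:projection-long-single} do not.
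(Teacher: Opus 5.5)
Your proof is correct and follows essentially the same route as the paper. In each case you identify the multiset of the $k$ smallest entries of the perturbed vector. You use $q_{\sigma_k}=\cdots=q_{\sigma_n}=x_k$, and for the block identity the strict inequality $x_{k-1}<x_k$ (hence $x_k\ge 1$ and $x_{k-1}\le x_k-1$) implied by $\|\bm v(\bm q)\|_0=k-1$. Beyond that, you only spell out the first identity, which the paper calls immediate.

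One minor caveat concerns your closing remark: without that hypothesis, the block identity can fail only when $n>k$. For $n=k$ the block decrement is just the single decrement at $\sigma_k$, which the second identity already covers.
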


\proof{Proof of Theorem \ref{thm:induce}.}
 By Proposition~\ref{prop:q-transitions}, a transition of the queue process $\bm Q$ from a state $\bm q$ is to one of the states on the right-hand sides of the arrows in \eqref{eq:q-arrival}--\eqref{eq:q-departure-long}  (the ``post-transition states''). %

For the transition \eqref{eq:q-arrival}, the induced pre-transition state is $\bm \phi(\bm q) = \bm x$, and since $\bm q+\bm e$ does not change the order of the components, the induced post-transition state is $\bm \phi( \bm q+ \bm e )  = \bm x + \bm e$. Hence,
\begin{equation} \label{eq:inducedArrival}
\bm x \rightarrow \bm x + \bm e \quad \text{at rate } \lambda.
\end{equation}

Next, consider the transitions corresponding to departures.
When $\|\bm v\|_0 < k-1$, the induced transitions of \eqref{eq:q-departure-nojoin} are
\begin{align*}
	& \bm x \rightarrow \bm \phi\bigl( (\bm q - \bm e_{\sigma_i})^+ \bigr)  = \mathcal R\bigl( (\bm x - \bm e_{i})^+ \bigr) \quad \quad \text{at rate } \gamma_{\sigma_i}^\pi(\bm q) = y_i ~\text{ for } i \in \llbracket 1, k-1 \rrbracket; \\
	& \bm x \rightarrow \bm \phi\bigl( (\bm q - \bm e_{\sigma_i})^+ \bigr)  = \mathcal R\bigl( (\bm x - \bm e_{k})^+ \bigr) \quad \quad \text{at rate } \gamma_{\sigma_i}^\pi(\bm q)  ~\text{ for } i \in \llbracket k, n \rrbracket,
\end{align*}
where the equalities follow from \eqref{lem:projection-short} and \eqref{lem:projection-long-single} in Lemma \ref{lem:projection-identities}. Note that the induced pre-transition and post-transition states are identical for all $i \in \llbracket k, n \rrbracket$, so the transition rates corresponding to each $i$ can be aggregated into $y_k := \sum_{i=k}^n \gamma^\pi_{\sigma_i}(\bm q)$, and therefore
\begin{equation} \label{eq:inducedDeparture1}
	\bm x \rightarrow  \mathcal R\bigl( (\bm x - \bm e_{i})^+ \bigr) \quad  \text{at rate } y_i ~\text{ for } i \in \llbracket 1, k \rrbracket. 
\end{equation}

Finally, when $\|\bm v\|_0 = k-1$, the induced transitions of \eqref{eq:q-departure-short} and \eqref{eq:q-departure-long} are, respectively,
\begin{align*}
	& \bm x\to \bm \phi\bigl( (\bm q-\bm e_{\sigma_i})^+ \bigr) = \mathcal{R} \bigl( (\bm x - \bm e_i)^+ \bigr) \quad \quad \quad ~~ \quad \text{at rate }\gamma_{\sigma_i}^\pi(\bm q) = y_i,\quad i\in \llbracket 1, k-1 \rrbracket;\\
	&\bm x \to \bm \phi \left(  \left(\bm q-\sum_{j=k}^n \bm e_{\sigma_j} \right)^+ \right) = \mathcal R\bigl(( \bm x - \bm e_k)^+\bigr)   \quad \text{at rate } \sum_{j=k}^n \gamma_{\sigma_j}^\pi(\bm q) = y_k,
\end{align*} 
where the equalities of the post-transition states follow from \eqref{lem:projection-short} and \eqref{lem:projection-long-block}. Equivalently, 
\begin{equation} \label{eq:inducedDeparture2}
	\bm x \rightarrow  \mathcal R\bigl( (\bm x - \bm e_{i})^+ \bigr) \quad \text{at rate } y_i ~\text{ for } i \in \llbracket 1, k \rrbracket. 
\end{equation}

It follows from \eqref{eq:inducedDeparture1} and \eqref{eq:inducedDeparture2} that there is no difference between the two cases $\|\bm v\|_0 < k-1$ and $\|\bm v\|_0 = k-1$, and thus the statement of the theorem follows from \eqref{eq:inducedArrival}-\eqref{eq:inducedDeparture2}. 
\hfill \Halmos \endproof

\subsection{Proof of Lemma~\ref{lem:projection-identities}}\label{app:projection-identities}

\proof{Proof.}
Recall that $\bm \phi(\bm q)$ is obtained by taking the $k$ smallest components of $\bm q \in \Z^n_+$ and reordering them in nondecreasing order, and that $\mathcal R(\bm y)$ orders the components of a vector $\bm y \in \Z^k_+$ in nondecreasing order. %
Hence, \eqref{lem:projection-short} is immediate.

To show \eqref{lem:projection-long-single}, recall that, by \eqref{StateSpace}, $\bm q \in \mathbb S^{(n,k)}$ implies that $q_{\sigma_k} =\cdots=q_{\sigma_n}$. For any fixed $j\in \llbracket k, n \rrbracket$, the components of $(\bm q-\bm e_{\sigma_j})^+$ are then equal to
\begin{equation*}
\{x_1,\ldots,x_{k-1},\,(x_k-1)^+,\,x_k,\ldots,x_k\}, \qforq \bm x = \bm \phi(\bm q),
\end{equation*}
with $n-k$ components equal to $x_k$. Since
\begin{equation*}
x_1 \le\cdots\le x_{k-1} \le x_k 
\qquad\text{and}\qquad
(x_k-1)^+ \le x_k,
\end{equation*}
the $k$ smallest components of $(\bm q-\bm e_{\sigma_j})^+$ are 
\begin{equation*}
\{x_1, \ldots, x_{k-1}, (x_k-1)^+\},
\end{equation*}
which are the same as the components of
$
(\bm x-\bm e_k)^+=(x_1,\ldots, (x_k-1)^+).
$
Hence \eqref{lem:projection-long-single} follows.

Finally, the equality $\|\bm v(\bm q) \|_0 = k-1$ implies that
\begin{equation*}
q_{\sigma_1}\le\cdots\le q_{\sigma_{k-1}}<q_{\sigma_k}=q_{\sigma_{k+1}}=\cdots=q_{\sigma_n}=:a.
\end{equation*}
Because $q_i\in \Z_+$  for all $i \in \N$, the strict inequality $q_{\sigma_{k-1}}<q_{\sigma_k}$ implies that $a\ge 1$ and $q_{\sigma_{k-1}}\le a-1 = (a-1)^+$. Then the components of
$
\left(\bm q-\sum_{j=k}^n \bm e_{\sigma_j}\right)^+
$
are
\begin{equation*}
\{ q_{\sigma_1},\ldots,q_{\sigma_{k-1}}, (a-1)^+ ,\ldots, (a-1)^+ \},
\end{equation*}
with $n-k+1$ of the components equal to $(a-1)^+$. Since  $q_{\sigma_i} \le q_{\sigma_{k-1}} \le (a-1)^+$ for every $i\le k-1$, the $k$ smallest components of 
$
(\bm q-\sum_{j=k}^n \bm e_{\sigma_j})^+
$
are 
$$\{ q_{\sigma_1},\ldots,q_{\sigma_{k-1}},(a-1)^+ \} = \{x_1, \ldots, x_{k-1}, (x_k - 1)^+\},$$
which are also the components of $(\bm x - \bm e_k)^+$, proving \eqref{lem:projection-long-block}. \hfill \Halmos
\endproof

\subsection{Proof of Lemma \ref{lem:stochasticOrder}}\label{app:order}

A key to the proof of Lemma \ref{lem:stochasticOrder} is the following result, whose proof appears in \S \ref{app:propcouling}. 

\begin{lemma}[GSC-order preservation]\label{lem:GSCpreserve}
Consider $\bm x, \bm z\in \mathcal R(\Z_{+}^k)$ and $i_{\bm x}, i_{\bm z} \in \llbracket 1, k \rrbracket$, with $i_{\bm z} \le i_{\bm x}$. If
$\bm x \le_{\text{GSC}} \bm z$, then %
\begin{equation*}
	\mathcal R\bigl((\bm x - \bm e_{i_{\bm x}})^+\bigr) \le_{\text{GSC}} \mathcal R\bigl((\bm z - \bm e_{i_{\bm z}})^+\bigr).
\end{equation*}
\end{lemma}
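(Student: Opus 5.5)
The plan is to turn the claim into a statement about the tail sums $S_m(\bm w):=\sum_{i=m}^k w_i$, $m\in\llbracket 1,k\rrbracket$ (with $S_{k+1}:=0$), by first computing $\mathcal R\bigl((\bm w-\bm e_i)^+\bigr)$ explicitly for any $\bm w\in\mathcal R(\Z_+^k)$.

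\emph{Step 1 (explicit form of the decrement).} Let $p(\bm w,i):=\min\{l: w_l=w_i\}$ be the first index in the block of ties containing $i$. If $w_i=0$, then $(\bm w-\bm e_i)^+=\bm w$. If $w_i>0$, then subtracting one from coordinate $p(\bm w,i)$ instead of $i$ gives a vector that is already nondecreasing, because $w_{p-1}<w_p$. It has the same multiset of components as $(\bm w-\bm e_i)^+$. Hence
\begin{equation*}
\mathcal R\bigl((\bm w-\bm e_i)^+\bigr)=\bm w-\mathbbm 1(w_i>0)\,\bm e_{p(\bm w,i)},\qquad
S_m\bigl(\mathcal R((\bm w-\bm e_i)^+)\bigr)=S_m(\bm w)-\mathbbm 1(w_i>0)\,\mathbbm 1(m\le p(\bm w,i)).
\end{equation*}

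\emph{Step 2 (reduction to one bad case).} Write $p_x:=p(\bm x,i_{\bm x})$ and $p_z:=p(\bm z,i_{\bm z})$. By Step 1, the required inequality at index $m$ can fail only if three things hold simultaneously:
\begin{itemize}
\item[(a)] $S_m(\bm x)=S_m(\bm z)$;
\item[(b)] $z_{i_{\bm z}}>0$ and $m\le p_z$, so the $\bm z$ side loses one;
\item[(c)] either $x_{i_{\bm x}}=0$ or $p_x<m$, so the $\bm x$ side does not lose one.
\end{itemize}
Two consequences of the hypothesis will be used repeatedly. Combining (a) with $S_{m+1}(\bm x)\le S_{m+1}(\bm z)$ gives $x_m\ge z_m$. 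Combining (a) with $S_{m-1}(\bm x)\le S_{m-1}(\bm z)$ gives $x_{m-1}\le z_{m-1}$. The goal is to show that (a)--(c) are contradictory.

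\emph{Step 3 (case $p_x<m$).} Here $m\ge 2$. Since $p_x<m\le p_z\le i_{\bm z}\le i_{\bm x}$, both $m-1$ and $m$ lie in the tie block of $i_{\bm x}$, so $x_{m-1}=x_m$. Then
\begin{equation*}
x_m=x_{m-1}\le z_{m-1}\le z_m\le x_m,
\end{equation*}
which forces $z_{m-1}=z_m$. Because $m\le p_z\le i_{\bm z}$ and $\bm z$ is nondecreasing, this equality places $m-1$ in the tie block of $i_{\bm z}$. That gives $p_z\le m-1$, contradicting $m\le p_z$.

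\emph{Step 4 (case $x_{i_{\bm x}}=0$).} Here $x_l=0$ for all $l\le i_{\bm x}$, and in particular for all $l\le p_z$. Therefore
\begin{equation*}
S_m(\bm x)=S_{p_z}(\bm x)\le S_{p_z}(\bm z)\le S_m(\bm z).
\end{equation*}
By (a), both inequalities are equalities, so $S_{p_z}(\bm x)=S_{p_z}(\bm z)$. Applying the $x\ge z$ consequence of Step 2 at index $p_z$ then gives $0=x_{p_z}\ge z_{p_z}=z_{i_{\bm z}}>0$, a contradiction. The case $p_z=k$ is covered since $S_{k+1}=0$.

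The main obstacle is Step 1: one must check that the resorted decrement always hits the first element of the tie block. After that, the argument is two short tie-chasing contradictions, and the hypothesis $i_{\bm z}\le i_{\bm x}$ enters exactly in placing $m-1$ and $m$ inside the relevant tie blocks.
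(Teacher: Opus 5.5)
\textbf{Gap in Step 3.} Your Steps 1, 2 and 4 are correct and follow the paper's proof. The paper also shows that $\mathcal R((\bm w-\bm e_i)^+)$ decrements the first index of the tie block of $i$; its $j_{\bm x}, j_{\bm z}$ are your $p_x, p_z$. It also reduces to the situation where only the $\bm z$-side loses a unit, and splits into the same two cases; your Step 4 corresponds to its Case 1.

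The gap is in Step 3, which is the paper's Case 2. You infer that, ``because $m\le p_z\le i_{\bm z}$ and $\bm z$ is nondecreasing,'' the equality $z_{m-1}=z_m$ places $m-1$ in the tie block of $i_{\bm z}$. That inference is valid only when $m=p_z$. If $m<p_z$, minimality of $p_z$ gives $z_m<z_{p_z}=z_{i_{\bm z}}$. So $m$ itself lies outside that tie block, and $z_{m-1}=z_m$ yields no contradiction; think of $\bm z=(1,1,2)$ with $m=2$, $p_z=3$. As written, you never exclude a violating index $m$ with $p_x<m<p_z$. These are exactly the indices $\llbracket j_{\bm x}+1,j_{\bm z}-1\rrbracket$ for which the paper needs a separate argument.

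\textbf{How to repair it.} The fix is short and uses what your chain already gives. All inequalities in $x_m=x_{m-1}\le z_{m-1}\le z_m\le x_m$ are equalities, so $z_m=x_m$. Hence $S_{m+1}(\bm x)=S_m(\bm x)-x_m=S_m(\bm z)-z_m=S_{m+1}(\bm z)$. If $m<p_z$, then (b) and (c) also hold at $m+1$, since $p_x<m+1\le p_z$; so (a)--(c) hold at $m+1$. Now take $m$ to be the largest index at which (a)--(c) hold. This forces $m=p_z$, where your tie-block contradiction is valid.

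The paper handles these indices with an analogous maximal-index argument. It first proves the strict gap $T_{j_{\bm z}}(\bm z)-T_{j_{\bm z}}(\bm x)\ge 1$. It then takes the largest $r_0\in\llbracket j_{\bm x}+1,j_{\bm z}-1\rrbracket$ with zero gap and uses the strict gap at $r_0+1$ to get $z_{r_0}\le x_{r_0}-1$. Since $x_{r_0-1}=x_{r_0}$, this gives a negative tail difference at $r_0-1$, a contradiction.
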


\proof{Proof of Lemma \ref{lem:stochasticOrder}.}

We only provide the proof of Assertion (i) of the lemma, since the proof of Assertion (ii) is analogous.
To this end, we first prove that
\begin{equation} \label{eq:GSCorder}
   \bm X^\pi \le_{\text{st,GSC}} \bm X^{\pi^\prime},
\end{equation}
provided that the order holds at time $0$, via a coupling argument. Specifically, we construct random elements $(\bm X,\bm Y)$ and $(\bm X',\bm Y')$, defined jointly on the same probability space, such that $(\bm X,\bm Y)$ has the same distribution as $(\bm X^\pi,\bm Y^\pi)$, while $(\bm X',\bm Y')$ has the same distribution as $(\bm X^{\pi'},\bm Y^{\pi'})$. We then show that the coupling can be constructed so that
\begin{equation*}
\bm X(t)\le_{\text{GSC}} \bm X'(t) \quad \text{w.p.1, for all } t \ge 0.   
\end{equation*}
We refer to $(\bm X,\bm Y)$ as System $\mathcal L$ and to $(\bm X',\bm Y')$ as System $\mathcal U$.

Let an event be either an arrival of a job or a completion of a task from either system.
For $m\ge 1$, let $\mathcal T_m$ denote the $m$th event time, with $\mathcal T_0 := 0$. Take $\bm X(0)$ and $\bm X^\prime(0)$ such that $\bm X(0) \le_{\text{GSC}} \bm X^\prime(0)$ w.p.1, and let $(\bm x, \bm x^\prime, \bm y) := \big(\bm X(\mathcal T_m), \bm X^\prime(\mathcal T_m),  \bm Y(\mathcal T_m) \big)$ for some $m\ge 1$. 

Consider System $\mathcal L$ at time $\mathcal T_m$. The time until the next arrival is exponentially distributed with rate $\la$ and the time until the next task completion from the $i$th shortest queue is exponentially distributed with rate $y_i$. If the $i$th shortest queue is empty, we schedule a ``dummy'' task completion, which does not change the queue length (the queue makes a fictitious transition from the empty state back into the empty state). Throughout the proof, when we refer to the event of a task completion, it includes the dummy task completions. Under this construction, the time until the next task completion from any of the queues is exponentially distributed with rate $\sum_{i=1}^k y_i = 1$. If the $(m+1)$st event is a task completion, we take $I_m$ be the index of the ordered queue vector in which this task completion occurs, so that %
$\mathbb P(I_m = i) = y_i$, $i\in \llbracket 1, k \rrbracket$.  

Next, consider System $\mathcal U$ at time $\mathcal T_m$. As in System $\mathcal L$, the time until the next job arrival in System $\mathcal U$ is exponentially distributed with rate $\la$, and the time until the next task completion is exponentially distributed with rate $1$. If the $(m+1)$st event in System $\mathcal U$ is a task completion, then it occurs at the $I_m^\prime$th shortest queue with probability $\mathbb P(I_m^\prime = i) = b_i$, $i\in \llbracket 1, k \rrbracket$. 

Since we allow dummy transitions, we can act as if all servers in both systems are constantly working at a combined rate $1$, and task completions (including ``dummy completions'') occur according to a unit-rate Poisson process. Further, we can use the same unit-rate Poisson process to generate task-completion epochs simultaneously in both systems, and then independently determine at which queue the task completion occurs, by generating the values of the queue indices $I_m$ and $I_m^\prime$. 

To this end, let $A_m$ be an exponentially distributed random variable with rate $\la$; $S_m$ be exponentially distributed with rate $1$; and $U_m$ be uniformly distributed on $[0,1]$, assuming these three random variables are mutually independent and are independent of all other random variables and processes. We use $A_m$ and $S_m$ to determine, respectively, the time until the next arrival to both systems, and the time until the next task completion in both systems after time $\mathcal T_m$. We use $U_m$ to determine the index of the queue at which the task completion occurs. Then 
\begin{equation*}
\mathcal T_{m+1} = \mathcal T_m + \min\{A_m,S_m\}.
\end{equation*}
If $A_m < S_m$, the next event is an arrival of a job to both systems, so that 
\begin{equation} \label{eq:arrival-event}
	\bm X(\mathcal T_{m+1}) = \bm x+\bm e \qandq \bm X^\prime(\mathcal T_{m+1}) = \bm x^\prime+\bm e.
\end{equation}
If $A_m > S_m$, then the next event is a task completion from the $I_m$th shortest queue in System $\mathcal L$, and a task completion from the $I_m^\prime$th shortest queue in System $\mathcal U$, so that
\begin{equation} \label{eq:completion-event}
	\bm X(\mathcal T_{m+1}) =  \mathcal R \Bigl( (\bm x -\bm e_{I_m})^+ \Bigr), \qquad \bm X^\prime(\mathcal T_{m+1}) =  \mathcal R \Bigl( (\bm x^\prime -\bm e_{I_m^\prime})^+ \Bigr),
\end{equation}
where the operator $(\cdot)^+$ ensures that a dummy transition (due to task completions at an empty queue) does not change the system's state. 

It remains to determine the values of the indices of the queues at which the task completions occur. Let
\begin{equation*}
I_m := \min\left\{j\in \llbracket 1, k\rrbracket:\ \sum_{i=1}^j y_i \ge U_m\right\} \qandq
I_m^\prime := \min\left\{j\in \llbracket 1, k\rrbracket:\ \sum_{i=1}^j b_i \ge U_m\right\}.
\end{equation*}
For $i\in \llbracket 1, k \rrbracket$, let $C_i : = \sum_{j=1}^i y_j$, with $C_0 := 0$. Observe that $I_m = i$ if and only if $C_{i-1} < U_m \le C_i$ for any $i\in \llbracket 1, k\rrbracket$, and thus $\mathbb P(I_m = i) = \mathbb P( C_{i-1} < U_m \le C_i ) = C_i - C_{i-1} = y_i$. Similarly, $\mathbb P(I_m^\prime = i) = b_i$. Since $\bm Y^\prime = \bm b \eta$ and $\bm Y(\mathcal T_m) = \bm y$, the ordering $\bm Y \ge_{\text{GSC}} \bm Y^\prime$ implies that $\bm y \ge_{\text{GSC}} \bm b$, which in turn yields 
\begin{equation} \label{eq:index-order}
	I_m \ge I_m^\prime ~\text{w.p.1}.
\end{equation}

We next prove by induction that
\begin{equation} \label{IneqX}
\bm X(\mathcal T_m)\le_{\text{GSC}}\bm X^\prime(\mathcal T_m) \quad \text{w.p.1 for all } m\ge0.
\end{equation}
By construction, \eqref{IneqX} holds for $m=0$. Assume now that it holds at time $\mathcal T_m$ for some $m > 0$.

If the $(m+1)$st event is a job arrival in both systems, then \eqref{eq:arrival-event} implies that \eqref{IneqX} also holds at time $\mathcal T_{m+1}$. If the $(m+1)$st event is a task completion in both systems, then the states at time $\mathcal T_{m+1}$ are given by \eqref{eq:completion-event}. Since $\bm X(\mathcal T_m)\le_{\text{GSC}}\bm X^\prime(\mathcal T_m)$ by the induction hypothesis and $I_m\ge I_m^\prime$ w.p.1 by \eqref{eq:index-order}, Lemma \ref{lem:GSCpreserve} yields
$$\bm X(\mathcal T_{m+1})\le_{\text{GSC}}\bm X^\prime(\mathcal T_{m+1}).$$
Thus, \eqref{IneqX} holds at time $\mathcal T_{m+1}$, completing the induction.
Since both processes are constant between events, and $\mathcal T_m\uparrow\infty$ as $m \uparrow \infty$ w.p.1., we have that
$\bm X(t)\le_{\text{GSC}}\bm X^\prime(t)$ w.p.1 %
for all $t \ge 0$, from which \eqref{eq:GSCorder} follows.

Finally, $X_k^\pi(t) = Z^\pi(t)$ and $X_k^{\pi^\prime}(t) = Z^{\pi^\prime}(t)$ by \eqref{eq:identity-z-x}, and the stochastic order relation $\bm X^{\pi} \le_{\text{st,GSC}} \bm  X^{\pi^\prime}$ implies that $Z^{\pi} \le_{\text{st}} Z^{\pi^\prime}$. Further, $\bm Q^{\pi^\prime}(t) = \bm 0$ if and only if $Z^{\pi^\prime}(t) = 0$, and $\bm Q^{\pi}(t) = \bm 0$ if and only if $Z^{\pi}(t) = 0$.
Now, $\bm Q^{\pi^\prime}$ is positive recurrent for any $\rho \in \mathcal S^{(n,k)} (\pi^\prime)$, so that $0$ is a regeneration state for $Z^{\pi^\prime}$ with finite expected regeneration time. It follows that $Z^\pi$ is also a positive recurrent regenerative process, and therefore $\bm Q^\pi$ is positive recurrent. We conclude that $\mathcal S^{(n,k)} (\pi^\prime) \subseteq \mathcal S^{(n,k)} (\pi)$, as stated. \hfill \Halmos

\endproof

\subsection{Proof of Lemma \ref{lem:GSCpreserve} } \label{app:propcouling}

\proof{Proof.}
Fix $m\in\llbracket 1,k\rrbracket$, and let 
\begin{equation*}
T_m(\bm u):=\sum_{r=m}^k u_r,
\quad \bm u\in\mathcal R(\Z_+^k), \qandq T_{k+1}(\bm u):=0.
\end{equation*}
The condition in the statement of the lemma that $\bm x\le_{\text{GSC}}\bm z$ is equivalent to
\begin{equation}\label{eq:gscpreserve-tail-order}
T_m(\bm z)-T_m(\bm x)\ge 0, ~ m\in\llbracket 1,k\rrbracket.
\end{equation}
To simplify the notation, let
\begin{equation} \label{xzstar}
\bm x^*:=\mathcal R\bigl((\bm x-\bm e_{i_{\bm x}})^+\bigr)
\qandq
\bm z^*:=\mathcal R\bigl((\bm z-\bm e_{i_{\bm z}})^+\bigr).
\end{equation}
For $j_{\bm x}:=\min\{r\in\llbracket 1,i_{\bm x}\rrbracket:x_r=x_{i_{\bm x}}\}$, it holds that $x_{j_{\bm x}} = x_{j_{\bm x}+1 } = \cdots = x_{i_{\bm x}}$. Hence, $\bm x^*$ is obtained from $\bm x$ by replacing the $j_{\bm x}$th component $x_{j_{\bm x}}$ by $(x_{j_{\bm x}} - 1)^+$. Therefore,
\begin{equation*}
	T_m(\bm x^*) = T_m(\bm x) - \mathbbm{1}\{ x_{i_{\bm x}} > 0, m\le j_{\bm x} \}.
\end{equation*}
Similarly, for
	$j_{\bm z}:=\min\{r\in\llbracket 1,i_{\bm z}\rrbracket:z_r=z_{i_{\bm z}}\}$,
it holds that
\begin{equation*}
	T_m(\bm z^*) = T_m(\bm z) - \mathbbm{1}\{ z_{i_{\bm z}} > 0, m\le j_{\bm z} \}. %
\end{equation*}
Consequently, for every $m\in\llbracket 1,k\rrbracket$,
\begin{equation}\label{eq:gscpreserve-main-diff}
T_m(\bm z^*)-T_m(\bm x^*)
= \big(T_m(\bm z)-T_m(\bm x) \big)
- \big(\mathbbm{1}\{z_{i_{\bm z}}>0,\ m\le j_{\bm z}\}
 - \mathbbm{1}\{x_{i_{\bm x}}>0,\ m\le j_{\bm x}\} \big).
\end{equation}

We next prove that $T_m(\bm z^*) - T_m(\bm x^*) \ge 0$ for every $m\in\llbracket 1,k\rrbracket$. To this end, observe that if
\begin{equation*}
\mathbbm{1}\{z_{i_{\bm z}}>0,\ m\le j_{\bm z}\}
\le
\mathbbm{1}\{x_{i_{\bm x}}>0,\ m\le j_{\bm x}\},
\end{equation*}
then \eqref{eq:gscpreserve-tail-order} and \eqref{eq:gscpreserve-main-diff} imply that
$T_m(\bm z^*)-T_m(\bm x^*)\ge0$. Thus, we need to show that $T_m(\bm z) - T_m(\bm x) \ge 1$ whenever 
\begin{equation} \label{eq:indicator}
	\mathbbm{1}\{z_{i_{\bm z}}>0,\ m\le j_{\bm z}\} = 1 \quad \text{and} \quad  \mathbbm{1}\{x_{i_{\bm x}}>0,\ m\le j_{\bm x}\} = 0.
\end{equation}

Observe that \eqref{eq:indicator} holds if and only if one of the following two mutually exclusive cases hold:
\begin{align*}
	& \text{Case 1: } ~z_{i_{\bm z}}>0,
\quad m\le j_{\bm z},
\quad x_{i_{\bm x}}=0; \\
	& \text{Case 2: } ~z_{i_{\bm z}}>0,
\quad j_{\bm x}<m\le j_{\bm z},
\quad x_{i_{\bm x}}>0. 
\end{align*}
We thus show $T_m(\bm z) - T_m(\bm x) \ge 1$ in either of these two cases. 

\paragraph{{\bf Proof for Case 1.}} %

Since $m\le j_{\bm z}\le i_{\bm z}\le i_{\bm x}$ and $\bm x$ is nonnegative and nondecreasing, $x_r=0$ for all $r\in\llbracket m,j_{\bm z}\rrbracket$. Hence
\begin{equation}\label{eq:gscpreserve-case1-tail}
\begin{aligned}
T_m(\bm z)-T_m(\bm x) =T_{j_{\bm z}+1}(\bm z)-T_{j_{\bm z}+1}(\bm x)
  +\sum_{r=m}^{j_{\bm z}}(z_r-x_r) \ge z_{j_{\bm z}}=z_{i_{\bm z}}\ge1,
\end{aligned}
\end{equation}
where the first inequality follows from \eqref{eq:gscpreserve-tail-order} and $z_r \ge x_r=0$ for $r \in \llbracket m,j_{\bm z}\rrbracket$, and the equality $z_{j_{\bm z}}=z_{i_{\bm z}}$ follows from the definition of $j_{\bm z}$. 

\paragraph{{\bf Proof of Case 2.}} %
In this case, it holds that
\begin{equation} \label{eq:index}
	j_{\bm x}<j_{\bm z}\le i_{\bm z}\le i_{\bm x}.
\end{equation} 
We first prove that
\begin{equation}\label{eq:gscpreserve-jz-tail-one}
T_{j_{\bm z}}(\bm z)-T_{j_{\bm z}}(\bm x)\ge 1,
\end{equation}
by taking the assumption that  \eqref{eq:gscpreserve-jz-tail-one} does not hold and arriving at a contradiction.
In particular, suppose that $T_{j_{\bm z}}(\bm z)-T_{j_{\bm z}}(\bm x) = 0$. Then 
\begin{equation}\label{eq:gscpreserve-case2-zero}
0 =T_{j_{\bm z}}(\bm z)-T_{j_{\bm z}}(\bm x) =T_{j_{\bm z}+1}(\bm z)-T_{j_{\bm z}+1}(\bm x)+z_{j_{\bm z}}-x_{j_{\bm z}}.
\end{equation}
Because $T_{j_{\bm z}+1}(\bm z)-T_{j_{\bm z}+1}(\bm x)\ge0$ by \eqref{eq:gscpreserve-tail-order}, and $x_{i_{\bm x}} = x_{i_{\bm z}}$ due to \eqref{eq:index}, it follows from \eqref{eq:gscpreserve-case2-zero} that
\begin{equation}\label{eq:gscpreserve-xge-z}
x_{i_{\bm x}} = x_{i_{\bm z}} \ge z_{i_{\bm z}}.
\end{equation}
For every $r\in\llbracket j_{\bm x},j_{\bm z}-1\rrbracket$, it holds that $x_r=x_{i_{\bm x}}$ by the definition of $j_{\bm x}$, and $z_r< z_{j_{\bm z}} = z_{i_{\bm z}}\le x_{i_{\bm x}}$ by the definition of $j_{\bm z}$ together with \eqref{eq:gscpreserve-xge-z}. Hence, it follows from \eqref{eq:gscpreserve-case2-zero} that
\begin{equation*}
\begin{aligned}
T_{j_{\bm x}}(\bm z)-T_{j_{\bm x}}(\bm x)
=T_{j_{\bm z}}(\bm z)-T_{j_{\bm z}}(\bm x)
  +\sum_{r=j_{\bm x}}^{j_{\bm z}-1}(z_r-x_r)
=\sum_{r=j_{\bm x}}^{j_{\bm z}-1}(z_r-x_{i_{\bm x}})<0,
\end{aligned}
\end{equation*}
contradicting \eqref{eq:gscpreserve-tail-order}. We therefore conclude that \eqref{eq:gscpreserve-jz-tail-one} must hold.

We next prove that $T_r(\bm z)-T_r(\bm x) \ge 1$ for any $r\in \llbracket j_{\bm x}+1,j_{\bm z}-1\rrbracket$ by assuming, for the sake of contradiction, that $T_r(\bm z)-T_r(\bm x) = 0$ for at least one index $r\in \llbracket j_{\bm x}+1,j_{\bm z}-1\rrbracket$. Let $r_0$ be the largest such index in $\llbracket j_{\bm x}+1,j_{\bm z}-1\rrbracket$. Then
\begin{equation}\label{eq:gscpreserve-r0-plus}
T_{r_0+1}(\bm z)-T_{r_0+1}(\bm x)\ge1,
\end{equation}
and since $r_0 \in \llbracket j_{\bm x}+1,j_{\bm z}-1\rrbracket \subseteq \llbracket j_{\bm x}, i_{\bm x} \rrbracket$, we have $x_{r_0-1} = x_{r_0}=x_{i_{\bm x}}$. Therefore,
\begin{align*}
z_{r_0-1}-x_{r_0-1} \le 
z_{r_0}-x_{r_0 - 1} = z_{r_0} - x_{r_0}
=\bigl(T_{r_0}(\bm z)-T_{r_0}(\bm x)\bigr)
 -\bigl(T_{r_0+1}(\bm z)-T_{r_0+1}(\bm x)\bigr) 
\le -1,
\end{align*}
where the last inequality follows from \eqref{eq:gscpreserve-r0-plus} because $T_{r_0}(\bm z)-T_{r_0}(\bm x)=0$. 
Thus
\begin{equation*}
\begin{aligned}
T_{r_0-1}(\bm z)-T_{r_0-1}(\bm x) =T_{r_0}(\bm z)-T_{r_0}(\bm x)+z_{r_0-1}-x_{r_0-1}
= z_{r_0-1}-x_{r_0-1} \le -1,
\end{aligned}
\end{equation*}
contradicting \eqref{eq:gscpreserve-tail-order}. 
Since we arrive at a contradiction, we conclude that $T_r(\bm z)-T_r(\bm x) \ge 1$ for any $r\in \llbracket j_{\bm x}+1,j_{\bm z}-1\rrbracket$, which, together with \eqref{eq:gscpreserve-jz-tail-one}, implies that $T_m(\bm z)-T_m(\bm x)\ge 1$ in Case 2.

Finally, since $T_m(\bm z) - T_m(\bm x)\ge 1$ for all $m\in\llbracket 1,k\rrbracket$, we have that $T_m(\bm x^*)\le T_m(\bm z^*)$, so that $\bm x^*\le_{\text{GSC}}\bm z^*$, for $\bm x^*$ and $\bm z^*$ in \eqref{xzstar}, as stated. \hfill \Halmos
\endproof

\section*{Acknowledgments}
This material is based upon work supported by the National Science Foundation under Award No. 1826353. Any opinions, findings and conclusions or recommendations expressed in this material are those of the authors and do not necessarily reflect the views of the National Science Foundation.

\bibliographystyle{informs2014}
\bibliography{bibdocument} %

@misc{nvidia_dgx_superpod_2026,
  author       = {{NVIDIA}},
  title        = {{NVIDIA DGX SuperPOD}},
  year         = {2026},
  howpublished = {\url{https://www.nvidia.com/en-us/data-center/dgx-superpod/}},
  note         = {Accessed July 12, 2026}
}

@article{kim2026atropos,
  title={Atropos: Improving Cost-Benefit Trade-off of LLM-based Agents under Self-Consistency with Early Termination and Model Hotswap},
  author={Kim, Naryeong and Yoo, Shin},
  journal={arXiv preprint arXiv:2604.15075},
  year={2026}
}

@article{agarwal2025first,
  title={First finish search: Efficient test-time scaling in large language models},
  author={Agarwal, Aradhye and Sengupta, Ayan and Chakraborty, Tanmoy},
  journal={arXiv preprint arXiv:2505.18149},
  year={2025}
}

@book{dai2020processing,
  title={Processing networks: fluid models and stability},
  author={Dai, JG and Harrison, J Michael},
  year={2020},
  publisher={Cambridge University Press}
}

@article{moyal2022stability,
  title={Stability of parallel server systems},
  author={Moyal, Pascal and Perry, Ohad},
  journal={Operations Research},
  volume={70},
  number={4},
  pages={2456--2476},
  year={2022},
  publisher={INFORMS}
}

@article{dai2024deepseekmoe,
  title   = {DeepSeekMoE: Towards Ultimate Expert Specialization in Mixture-of-Experts Language Models},
  author  = {Dai, Damai and Deng, Chengqi and Zhao, Chenggang and Xu, R. X. and Gao, Huazuo and Chen, Deli and Li, Jiashi and Zeng, Wangding and Yu, Xingkai and Wu, Y. and Xie, Zhenda and Li, Y. K. and Huang, Panpan and Luo, Fuli and Ruan, Chong and Sui, Zhifang and Liang, Wenfeng},
  journal = {arXiv preprint arXiv:2401.06066},
  year    = {2024}
}

@article{meijer2024optimization,
  title={Optimization of Inventory and Capacity in Large-Scale Assembly Systems Using Extreme-Value Theory},
  author={Meijer, Mirjam S and Schol, Dennis and van Jaarsveld, Willem and Vlasiou, Maria and Zwart, Bert},
  journal={Stochastic Systems},
  year={2024},
  publisher={INFORMS}
}

@article{schol2022large,
  title={Large fork-join queues with nearly deterministic arrival and service times},
  author={Schol, Dennis and Vlasiou, Maria and Zwart, Bert},
  journal={Mathematics of Operations Research},
  volume={47},
  number={2},
  pages={1335--1364},
  year={2022},
  publisher={INFORMS}
}

@article{harchol2021open,
  title={Open problems in queueing theory inspired by datacenter computing},
  author={Harchol-Balter, Mor},
  journal={Queueing Systems},
  volume={97},
  number={1},
  pages={3--37},
  year={2021},
  publisher={Springer}
}

@article{wang2019delay,
  title={Delay asymptotics and bounds for multi-task parallel jobs},
  author={Wang, Weina and Harchol-Balter, Mor and Jiang, Haotian and Scheller-Wolf, Alan and Srikant, Rayadurgam},
  journal={ACM SIGMETRICS Performance Evaluation Review},
  volume={46},
  number={3},
  pages={2--7},
  year={2019},
  publisher={ACM New York, NY, USA}
}

@article{andradottir2005throughput,
  title={Throughput maximization for tandem lines with two stations and flexible servers},
  author={Andrad{\'o}ttir, Sigr{\'u}n and Ayhan, Hayriye},
  journal={Operations Research},
  volume={53},
  number={3},
  pages={516--531},
  year={2005},
  publisher={INFORMS}
}

@article{li2020federated,
  title={Federated learning: Challenges, methods, and future directions},
  author={Li, Tian and Sahu, Anit Kumar and Talwalkar, Ameet and Smith, Virginia},
  journal={IEEE signal processing magazine},
  volume={37},
  number={3},
  pages={50--60},
  year={2020},
  publisher={IEEE}
}

@article{lee2017speeding,
  title={Speeding up distributed machine learning using codes},
  author={Lee, Kangwook and Lam, Maximilian and Pedarsani, Ramtin and Papailiopoulos, Dimitris and Ramchandran, Kannan},
  journal={IEEE Transactions on Information Theory},
  volume={64},
  number={3},
  pages={1514--1529},
  year={2017},
  publisher={IEEE}
}

@article{hu2021distributed,
  title={Distributed machine learning for wireless communication networks: Techniques, architectures, and applications},
  author={Hu, Shuyan and Chen, Xiaojing and Ni, Wei and Hossain, Ekram and Wang, Xin},
  journal={IEEE Communications Surveys \& Tutorials},
  volume={23},
  number={3},
  pages={1458--1493},
  year={2021},
  publisher={IEEE}
}

@article{ouyang2022training,
  title={Training language models to follow instructions with human feedback},
  author={Ouyang, Long and Wu, Jeffrey and Jiang, Xu and Almeida, Diogo and Wainwright, Carroll and Mishkin, Pamela and Zhang, Chong and Agarwal, Sandhini and Slama, Katarina and Ray, Alex and others},
  journal={Advances in neural information processing systems},
  volume={35},
  pages={27730--27744},
  year={2022}
}

@inproceedings{marin2016dynamic,
  title={Dynamic control of the join-queue lengths in saturated fork-join stations},
  author={Marin, Andrea and Rossi, Sabina},
  booktitle={International Conference on Quantitative Evaluation of Systems},
  pages={123--138},
  year={2016},
  organization={Springer}
}

@article{joshi2017efficient,
  title={Efficient redundancy techniques for latency reduction in cloud systems},
  author={Joshi, Gauri and Soljanin, Emina and Wornell, Gregory},
  journal={ACM Transactions on Modeling and Performance Evaluation of Computing Systems (TOMPECS)},
  volume={2},
  number={2},
  pages={1--30},
  year={2017},
  publisher={ACM New York, NY, USA}
}

@article{ko2008sojourn,
  title={Sojourn times in G/M/1 fork-join networks},
  author={Ko, Sung-Seok and Serfozo, Richard F},
  journal={Naval Research Logistics (NRL)},
  volume={55},
  number={5},
  pages={432--443},
  year={2008},
  publisher={Wiley Online Library}
}

@article{varki1999mean,
  title={Mean value technique for closed fork-join networks},
  author={Varki, Elizabeth},
  journal={ACM SIGMETRICS Performance Evaluation Review},
  volume={27},
  number={1},
  pages={103--112},
  year={1999},
  publisher={ACM New York, NY, USA}
}

@article{nelson1988approximate,
  title={Approximate analysis of fork/join synchronization in parallel queues},
  author={Nelson, Randolph and Tantawi, Asser N},
  journal={IEEE transactions on computers},
  volume={37},
  number={6},
  pages={739--743},
  year={1988},
  publisher={IEEE}
}

@article{flatto1984two,
  title={Two parallel queues created by arrivals with two demands I},
  author={Flatto, Leopold and Hahn, Sann},
  journal={SIAM Journal on Applied Mathematics},
  volume={44},
  number={5},
  pages={1041--1053},
  year={1984},
  publisher={SIAM}
}

@article{baccelli1989fork,
  title={The fork-join queue and related systems with synchronization constraints: Stochastic ordering and computable bounds},
  author={Baccelli, Francois and Makowski, Armand M and Shwartz, Adam},
  journal={Advances in Applied Probability},
  volume={21},
  number={3},
  pages={629--660},
  year={1989},
  publisher={Cambridge University Press}
}

@article{down2006dynamic,
  title={Dynamic load balancing in parallel queueing systems: Stability and optimal control},
  author={Down, Douglas G and Lewis, Mark E},
  journal={European Journal of Operational Research},
  volume={168},
  number={2},
  pages={509--519},
  year={2006},
  publisher={Elsevier}
}

@article{thomasian2014analysis,
  title={Analysis of fork/join and related queueing systems},
  author={Thomasian, Alexander},
  journal={ACM Computing Surveys (CSUR)},
  volume={47},
  number={2},
  pages={1--71},
  year={2014},
  publisher={ACM New York, NY, USA}
}

@inproceedings{pedarsani2014scheduling,
  title={Scheduling tasks with precedence constraints on multiple servers},
  author={Pedarsani, Ramtin and Walrand, Jean and Zhong, Yuan},
  booktitle={2014 52nd Annual Allerton Conference on Communication, Control, and Computing (Allerton)},
  pages={1196--1203},
  year={2014},
  organization={IEEE}
}

@inproceedings{pedarsani2014robust,
  title={Robust scheduling in a flexible fork-join network},
  author={Pedarsani, Ramtin and Walrand, Jean and Zhong, Yuan},
  booktitle={53rd IEEE Conference on Decision and Control},
  pages={3669--3676},
  year={2014},
  organization={IEEE}
}

@article{andradottir2003dynamic,
  title={Dynamic server allocation for queueing networks with flexible servers},
  author={Andrad{\'o}ttir, Sigr{\'u}n and Ayhan, Hayriye and Down, Douglas G},
  journal={Operations Research},
  volume={51},
  number={6},
  pages={952--968},
  year={2003},
  publisher={INFORMS}
}

@inproceedings{joshi2012coding,
  title={Coding for fast content download},
  author={Joshi, Gauri and Liu, Yanpei and Soljanin, Emina},
  booktitle={2012 50th Annual Allerton Conference on Communication, Control, and Computing (Allerton)},
  pages={326--333},
  year={2012},
  organization={IEEE}
}

@article{joshi2014delay,
  title={On the delay-storage trade-off in content download from coded distributed storage systems},
  author={Joshi, Gauri and Liu, Yanpei and Soljanin, Emina},
  journal={IEEE Journal on Selected Areas in Communications},
  volume={32},
  number={5},
  pages={989--997},
  year={2014},
  publisher={IEEE}
}

@article{gardner2016queueing,
  title={Queueing with redundant requests: exact analysis},
  author={Gardner, Kristen and Zbarsky, Samuel and Doroudi, Sherwin and Harchol-Balter, Mor and Hyyti{\"a}, Esa and Scheller-Wolf, Alan},
  journal={Queueing Systems},
  volume={83},
  number={3},
  pages={227--259},
  year={2016},
  publisher={Springer}
}

@article{gardner2017redundancy,
  title={Redundancy-d: The power of d choices for redundancy},
  author={Gardner, Kristen and Harchol-Balter, Mor and Scheller-Wolf, Alan and Velednitsky, Mark and Zbarsky, Samuel},
  journal={Operations Research},
  volume={65},
  number={4},
  pages={1078--1094},
  year={2017},
  publisher={INFORMS}
}

@article{bassamboo2012little,
  title={A little flexibility is all you need: on the asymptotic value of flexible capacity in parallel queuing systems},
  author={Bassamboo, Achal and Randhawa, Ramandeep S and Mieghem, Jan A Van},
  journal={Operations Research},
  volume={60},
  number={6},
  pages={1423--1435},
  year={2012},
  publisher={INFORMS}
}

@article{anton2021stability,
  title={On the stability of redundancy models},
  author={Anton, Elene and Ayesta, Urtzi and Jonckheere, Matthieu and Verloop, Ina Maria},
  journal={Operations Research},
  volume={69},
  number={5},
  pages={1540--1565},
  year={2021},
  publisher={INFORMS}
}

@article{gardner2019smart,
  title={Smart dispatching in heterogeneous systems},
  author={Gardner, Kristen and Stephens, Cole},
  journal={ACM SIGMETRICS Performance Evaluation Review},
  volume={47},
  number={2},
  pages={12--14},
  year={2019},
  publisher={ACM New York, NY, USA}
}

@book{sethuraman2022analysis,
  title={Analysis of fork-join systems: Network of queues with precedence constraints},
  author={Sethuraman, Samyukta},
  year={2022},
  publisher={CRC Press}
}

@article{rizk2016stochastic,
  title={Stochastic bounds in fork--join queueing systems under full and partial mapping},
  author={Rizk, Amr and Poloczek, Felix and Ciucu, Florin},
  journal={Queueing Systems},
  volume={83},
  number={3-4},
  pages={261--291},
  year={2016},
  publisher={Springer}
}

@article{carmeli2023state,
  title={State-dependent estimation of delay distributions in fork-join networks},
  author={Carmeli, Nitzan and Yom-Tov, Galit B and Boxma, Onno J},
  journal={Manufacturing \& Service Operations Management},
  volume={25},
  number={3},
  pages={1081--1098},
  year={2023},
  publisher={INFORMS}
}

@article{ozkan2019control,
  title={On the control of fork-join networks},
  author={{\"O}zkan, Erhun and Ward, Amy R},
  journal={Mathematics of Operations Research},
  volume={44},
  number={2},
  pages={532--564},
  year={2019},
  publisher={INFORMS}
}

@article{ozkan2022control,
  title={Control of Fork-Join Processing Networks with Multiple Job Types and Parallel Shared Resources},
  author={{\"O}zkan, Erhun},
  journal={Mathematics of Operations Research},
  volume={47},
  number={2},
  pages={1310--1334},
  year={2022},
  publisher={INFORMS}
}

@article{joshi2015queues,
  title={Queues with redundancy: Latency-cost analysis},
  author={Joshi, Gauri and Soljanin, Emina and Wornell, Gregory},
  journal={ACM SIGMETRICS Performance Evaluation Review},
  volume={43},
  number={2},
  pages={54--56},
  year={2015},
  publisher={ACM New York, NY, USA}
}

@article{shah2015redundant,
  title={When do redundant requests reduce latency?},
  author={Shah, Nihar B and Lee, Kangwook and Ramchandran, Kannan},
  journal={IEEE Transactions on Communications},
  volume={64},
  number={2},
  pages={715--722},
  year={2015},
  publisher={IEEE}
}

@article{lee2017mds,
  title={The MDS queue: Analysing the latency performance of erasure codes},
  author={Lee, Kangwook and Shah, Nihar B and Huang, Longbo and Ramchandran, Kannan},
  journal={IEEE Transactions on Information Theory},
  volume={63},
  number={5},
  pages={2822--2842},
  year={2017},
  publisher={IEEE}
}

@misc{anthropic_agents_2024,
  author       = {Anthropic},
  title        = {Building effective agents},
  year         = {2024},
  url          = {https://www.anthropic.com/research/building-effective-agents},
  note         = {Accessed: February 2, 2025}
}

@article{pedarsani2017robust,
  title={Robust scheduling for flexible processing networks},
  author={Pedarsani, Ramtin and Walrand, Jean and Zhong, Yuan},
  journal={Advances in Applied Probability},
  volume={49},
  number={2},
  pages={603--628},
  year={2017},
  publisher={Cambridge University Press}
}

\end{document}